\newcommand{\eps}{\varepsilon}                          % epsilon
\newcommand{\epsapprox}{$(1+\eps)$-approximation}
\newcommand{\dc}{d_C} % distance function
\renewcommand{\emph}[1]{\textit{\textbf{#1}}}
\newcommand\comment[1]{\textcolor{red}{#1}}
\newcommand\xcomm[2]{\comment{#1 says: #2}}
\newcommand\gill[1]{\xcomm{GB}{#1}}
\newcommand\comment[1]{}
\newcommand\xcomm[2]{}
\newcommand\gill[1]{}
\DeclareRobustCommand*{\ora}{\overrightarrow}
\newcommand{\MWA}{{\rm MWA}}
\newcommand{\MinBall}{{\rm MinBall}}
\newcommand{\MaxBall}{{\rm MaxBall}}
\newcommand{\SP}{\kern+1pt} % tiny additional space
\newtheorem{remark}{Remark}
\title{Diamonds are Forever in the Blockchain:\\ Geometric Polyhedral Point-Set Pattern Matching}
\author{
   \quad\quad\quad
   Gill Barequet\thanks{ 
   Dept.\ of Computer Science,
   Technion---Israel Inst.\ of Technology,
 {\tt barequet@cs.technion.ac.il}}
   \quad\quad\quad
\and
   \quad\quad\quad
   Shion Fukuzawa\thanks{ 
 Dept.\ of Computer Science, University of California, Irvine,
 {\tt \{fukuzaws,goodrich,mosegued,eozel\}@uci.edu}}
   \quad\quad\quad
\and
   \quad\quad\quad
   Michael T. Goodrich$^\dagger$
   \quad\quad\quad
\and
   David M. Mount\thanks{ 
   Dept.\ of Computer Science,
 University of Maryland, College Park,
 {\tt mount@umd.edu}}
   \quad\quad\quad
\and
   \quad
   Martha C. Osegueda$^\dagger$
   \quad\quad\quad
\and
   \quad\quad\quad
   Evrim Ozel$^\dagger$
   \quad\quad\quad
}
\begin{document}
\thispagestyle{empty}
\maketitle

\begin{abstract}
Motivated by blockchain technology for supply-chain tracing of
ethically sourced diamonds, we 
study
geometric polyhedral point-set pattern matching 
as minimum-width polyhedral annulus problems 
under translations and rotations. 
We provide two \hbox{\epsapprox} schemes under translations 
with $O(\eps^{-d} n)$-time for~$d$ dimensions and $O(n\log \eps^{-1} + \eps^{-2})$-time for two dimensions, and we give
an $O(f^{d-1}\eps^{1-2d}n)$-time algorithm when 
also allowing for rotations, parameterized on $f$, which we define as the slimness of the point set.
\end{abstract}

\section{Introduction}

A notable recent computational geometry application 
is for tracking supply chains for natural diamonds, 
for which the industry and customers
are strongly motivated to prefer ethically-sourced provenance (\textit{e.g.}, to avoid
so-called ``blood diamonds'').
For example, the \emph{Tracr} system employs a blockchain for
tracing the supply chain for a diamond from its being mined as a rough
diamond to a customer purchasing a polished diamond~\cite{tracr}.
(See Figure~\ref{fig:tracr}.)

\begin{figure}[hbt]
\vspace*{-2pt}
\centering
\includegraphics[width = 3.3in, trim = 0.7in 1.5in 0.2in 1.3in, clip]{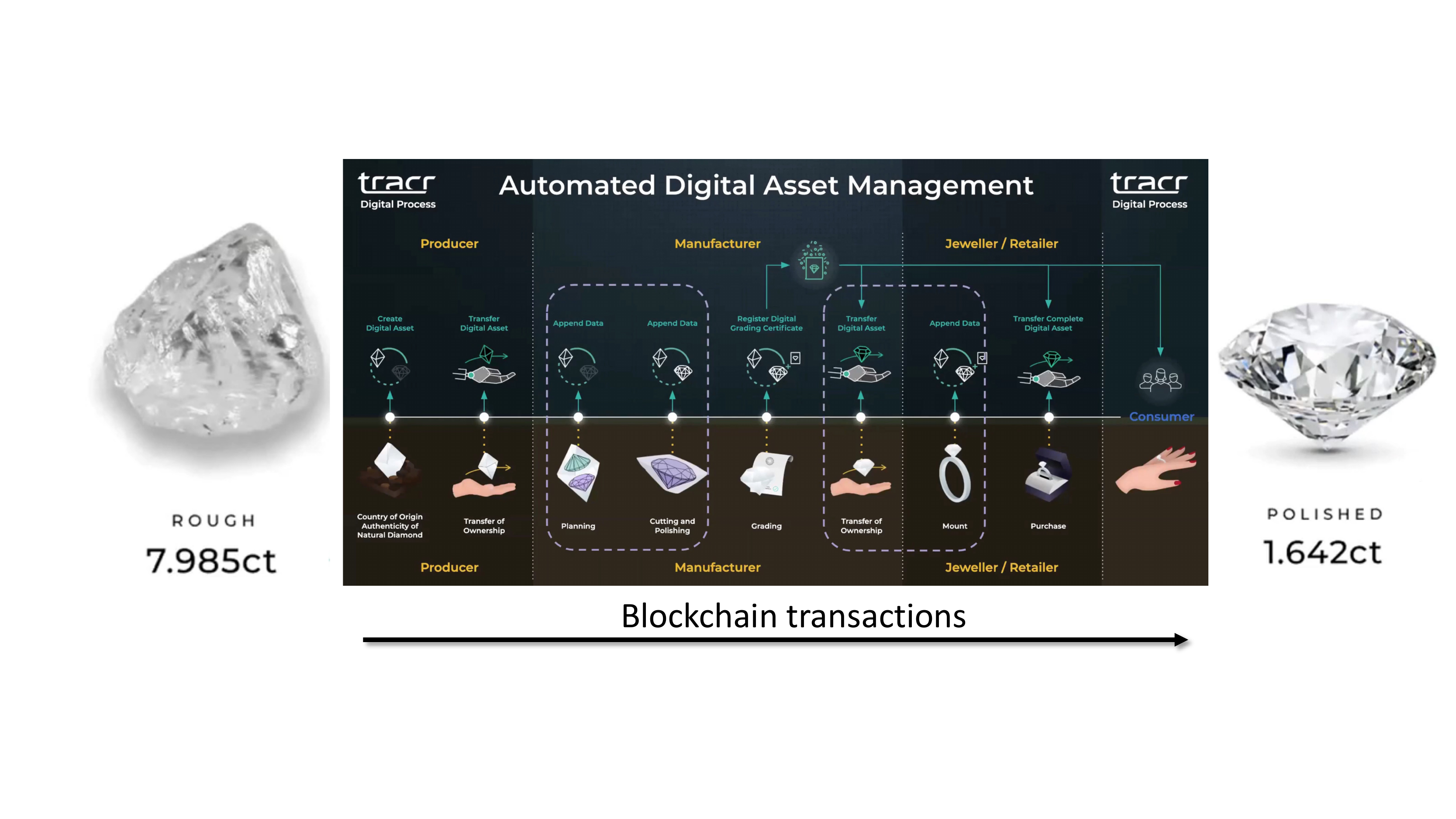}
\vspace*{-4pt}
\caption{\label{fig:tracr} Blockchain transactions in a diamond
supply chain, providing provenance, traceability, and authenticity
of an ethically-sourced diamond.
}
\vspace*{-4pt}
\end{figure}

Essential steps in the Tracr blockchain supply-chain
process require methods to match point sets against geometric shapes,
\textit{e.g.}, to guarantee that a diamond has not been replaced with one of
questionable provenance~\cite{tracr}. Currently, the Tracr system uses standard machine-learning techniques
to perform the shape matching steps. We 
believe, however, that better accuracy can be 
achieved by using computational geometry approaches.
In particular, motivated by the Tracr application,
we are interested in this paper in efficient methods for matching point sets
against geometric shapes, such as polyhedra.
Formalizing this problem, we study the problem of 
finding the best translation and/or rotation of
the boundary of a convex polytope, $P$ (\textit{e.g.}, defining a polished diamond shape),
to match a set of~$n$ points in a $d$-dimensional
($d \geq 3$) space, where the point set is a ``good'' sample of the
boundary of a polytope that is purported to be~$P$.
Since there may be small inaccuracies in the sampling process, our
aim is to compute a \emph{minimum width} polyhedral annulus determined by $P$ that
contains the sampled points.
In the interest of optimizing running time, rather than seeking an exact solution,
we seek an approximate solution that deviates
from the real solution by a predefined quantity~$\eps>0$.

\paragraph{Related Work.}
We are not familiar with any previous work on the problems we study in this
paper.
Nevertheless, there is considerable prior work on
the general area of matching a geometric shape to a set of points, especially
in the plane.
For example, Barequet, Bose, Dickerson, and 
Goodrich~\cite{DBLP:journals/jda/BarequetBDG05} give solutions 
to several constrained polygon annulus placement problems for
offset and scaled polygons
including an algorithm for finding the translation for the 
minimum offset of an $m$-vertex polygon that contains a set of $n$
points in $O(n\log^2 n + m)$ time.
Barequet, Dickerson, and Scharf~\cite{barequet2008covering}
study the problem of covering a maximum
number of $n$ points with an $m$-vertex polygon (not just its boundary) under
translations, rotations, and/or scaling, giving, \textit{e.g.}, an algorithm running
in time $O(n^3m^4\log (nm))$ for the general problem.
There has also been work on finding a minimum-width annulus
for rectangles and squares, \textit{e.g.},
see~\cite{GLUCHSHENKO2009168,BAE20182,BAE2021101697,MUKHERJEE201374}.

Chan~\cite{chan2000approximating} presents
a $(1+\eps)$-approximation method that finds a minimum-width spherical
annulus of $n$ points in $d$~dimensions
in $O(n\log (1/\eps)+\eps^{O(1)})$ time, and 
Agarwal, Har-Peled, and Varadarajan~\cite{a-core-04}
improve this to $O(n+1/\eps^{O(d^2)})$ time via 
coresets~\cite{agarwal2005geometric,phillips2017coresets,yu2008practical,agarwal2008robust}. 
A line of work has considered computing the spherical annulus under stronger assumptions on the points samples. Most notably Devillers and Ramos~\cite{devillers2002round} combine various definitions for ``minimum quality assumptions'' by Melhorn, Shermer and Yap~\cite{mehlhorn1997} and Bose and Morin~\cite{bose1998} and show that under this assumption the spherical annulus can be computed in linear time for $d=2$ and present empirical evidence for higher dimensions. 
Arya, da~Fonseca, and Mount~\cite{arya2018approximate} show how to find
an $\eps$-approximation of the width of $n$ points in 
$O(n\log (1/\eps) + 1/\eps^{(d-1)/2+\alpha})$ time, for a 
constant $\alpha>0$.
Bae~\cite{BAE2019398} shows how to find a min-width $d$-dimensional 
hypercubic shell in $O(n^{\lfloor d/2\rfloor}\log^{d-1} n)$ expected time.

\paragraph{Our Results.}
Given a set of $n$ points in ${\bf R}^d$, we provide an $O(\eps^{-d} n)$-time
$(1+\eps)$-approximate polytope-matching 
algorithm under translations, for $d \geq 3$, and 
$O(n\log \eps^{-1} + \eps^{-2})$ time for $d=2$, 
and we provide an $O(f^{d-1}\eps^{1-2d}n)$-time algorithm when also allowing
for rotations, where the complexity of the polytope is constant and for rotations is parameterized by $f$, which we define as the \emph{slimness} of the point set.

The paper is organized as follows.
In \Cref{sec:preliminaries}, we set the ground for this work by providing some
necessary definitions.
In \Cref{sec:search-space}, we approximate the \MWA\ under only translations. In this section, we provide a constant factor approximation scheme, a \epsapprox\ scheme and describe how to improve the running time in two dimensions. In \Cref{sec:rotations}, we consider the \MWA\ under rotations.

\section{Preliminaries}
\label{sec:preliminaries}
Following previous
convention~\cite{amenta2007size,attali2003complexity,attali2004linear,erickson2001nice,amenta1999surface},
we say that
a point set~$S$ is a
\emph{$\delta$-uniform sample} of a surface $\Sigma \subset \mathbb{R}^d$
if for every point $p \in \Sigma$, there exists a point $q \in S$ such that
$d(p, q) \leq \delta$.
Let $C \subset \mathbb{R}^d$ be a closed, convex polyhedron containing the origin in its interior. Given $C$, and $x \in \mathbb{R}^d$, define $x + C = \{x + y : y \in C\}$ (the translation of~$C$ by~$x$), and for $r \in \mathbb{R}$, define $r C = \{r y : y \in C\}$. A \emph{placement} of $C$ is a pair $(x,r)$, where $x \in \mathbb{R}^d$ and $r \in \mathbb{R}^{\geq 0}$, representing the translated and scaled copy $x + r C$. We refer to $x$ and $r$ as the \emph{center} and \emph{radius} of the placement, respectively. Two placements are \emph{concentric} if they share the same center.

Let $C$ be any closed convex body in  $\mathbb{R}^d$ containing the origin in its interior. The convex distance function induced by $C$ is the function $d_C: \mathbb{R}^d \times \mathbb{R}^d \rightarrow \mathbb{R}^{\geq 0}$, where 
\[d_C(p,q) = \min \{r : r \geq 0 ~\mathrm{and}~ q \in p + r C\}\]

Thus, the convex distance between $p$ and $q$ is determined by the minimum radius placement of $C$ centered at $p$ that contains $q$ (see Figure~\ref{fig:prelims}). When $C$ is centrally symmetric, this defines a metric, but for general $C$, the function $d_C$ may not be symmetric. We call the original shape $C$ the \emph{unit ball} $U_C$ under the distance function $d_C$.
Note that $d_C(a,c)= d_C(a,b)+d_C(b,c)$ when $a$, $b$ and $c$ are collinear and appear in that order.

Define an \emph{annulus} for $C$ to be the set-theoretic difference of two concentric placements $(p + R C) \setminus (p + r C)$, for $0 \leq r \leq R$. The \emph{width} of the annulus is $R - r$.
Given a $\delta$-uniform sample of points, $S$, 
there are three placements of $C$ we are interested in:

$\bullet$ \textbf{Minimum enclosing ball (MinBall)}:  A placement of $C$ of the smallest radius that contains all of the points in $S$. 

$\bullet$ \textbf{Maximum enclosed ball (MaxBall)}: A placement of $C$ of the largest radius, centered within the convex hull of $S$, that contains no points in~$S$.

$\bullet$ \textbf{Minimum width annulus (MWA)}: A placement of an annulus for $C$ of minimum width, that contains all of the points in $S$. 

Note that, following the definition of the MaxBall,
we require that
the center of the MWA must also lie within the convex hull of $S$.
For each of the above placements, we also refer to parameterized versions, for example MinBall($p$), MaxBall($p$), or MWA($p$). 
These respectively refer to the minimum enclosing ball, maximum enclosed ball, or minimum width annulus {centered at the point $p$}.

Further, we use $|\MinBall(p)|$ and $|\MaxBall(p)|$ to 
denote the radius of MinBall$(p)$ and MaxBall$(p)$, respectively, 
and we use $|\MWA(p)|$ to denote the width of MWA$(p)$.

The ratio, $F$, of the MinBall over the MaxBall of 
$S\subset \mathbb{R}^d$ under distance function $d_C$ defines 
the \emph{fatness} of $S$ under $d_C$, such that $F:=|\MinBall|/|\MaxBall|$. Also, we define 
the \emph{concentric fatness} as the ratio of the MinBall and 
MaxBall centered at the MWA, such that $F_c:=|\MinBall(c_{opt})|/|\MaxBall(c_{opt})|$ where $c_{opt}$ is the center of the \MWA. Conversely, 
we define the \emph{slimness} 
to be $f^{-1}=1-F_c^{-1}$, which
corresponds to the ratio of the $\MinBall(c_{opt})$ over the MWA, \emph{i.e.}, $f:=|\MinBall(c_{opt})|/|\MWA|$. 
\begin{figure}
    \centering
    \includegraphics[width=0.95\linewidth]{}
    \caption{\textbf{Left:} a visual representation of a polyhedral distance function and the distance between two points. \textbf{Center:} The MinBall under $d_C$ containing all points in $S$, centered at $c$. \textbf{Right:} The MWA of $S$ with all points within MinBall$(c)\setminus $MaxBall(c).}
    \label{fig:prelims}
\end{figure}

\begin{remark} 
\label{rm:deltauniformity}
In order for a $\delta$-uniform sample to represent the surface, $\Sigma$, with sufficient accuracy for a meaningful $\MWA$, we assume that the sample must contain at least one point between corresponding facets of the $\MWA$. Where corresponding facets refer to facets of the $\MinBall$ and $\MaxBall$ representing the same facet of $U_C$. Therefore, 
in the remainder of the paper, we assume we have a $\delta$-uniform sample and 
that $\delta$ is small enough to guarantee this condition for even the smallest facets. 

\end{remark}

In practice, it would be easy to determine a small enough $\delta$ before sampling $\Sigma$, since only sufficiently slim surfaces would benefit from finding the $\MWA$, and very fat surfaces would yield increasingly noisy $\MaxBall$. One easy approach would be setting $\delta$ to the smallest facet of the $\MinBall$ and scaling down by an arbitrary constant larger than the maximum expected fatness, such as 100. This example imposes a very generous bound on fatness since it would allow the inner shell to be 1\% of the size of the outer shell, practically a single digit constant would often suffice.

Also, note that, for a given center point $c$, $\MWA(c)$ is uniquely defined as the annulus centered at $c$ with inner radius $\min_{p\in S} d_C(c,p)$ and outer radius $\max_{p\in S} d_C(c,p)$.
Further, let us assume
that the reference polytope defining our polyhedral distance function
has $m$ facets, where $m$ is a fixed constant,
since the sample size is expected to be 
much larger than $m$. Thus,
$\dc$ can be calculated in $O(m)$ time; hence, $\MWA(c)$ can be 
found in $O(mn)$ time, which is $O(n)$ under our assumption.

\section{Approximating the Minimum Width Annulus}\label{sec:search-space}

Let us first describe how to find a constant factor approximation
of $\MWA$ under translations.
Note that,
by assumption, 
the center $c$ of our approximation lies 
within the convex hull of $S$. 
Let us denote the center, outer radius, inner radius, and width of the optimal MWA as $c_{opt}$, $R_{opt}$, $r_{opt}$, and $w_{opt}$.

We begin with Lemma~\ref{lm:searchspace}, where we prove $c_{opt}$ is within a certain distance from the center of the \MinBall\, $c$, providing a search region for $c_{opt}$. In Lemma~\ref{lm:dist}, we bound the width achieved by a center-point that is sufficiently close to $c_{opt}$. We then use this in Lemma~\ref{lm:capprox} to prove that $|\MWA (c)|$ achieves a constant factor approximation.

\begin{lemma}
\label{lm:searchspace}
The center of the $\MWA$, $c_{opt}$, is within distance $w_{opt}$ of the 
center of the $\MinBall$, $c$. That is, $\dc (c, c_{opt}) \leq w_{opt}$.
\end{lemma}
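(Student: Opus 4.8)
The plan is to combine the easy bound $|\MinBall|\le R_{opt}$ with the additivity of $\dc$ along collinear points. Write $R:=|\MinBall|$, let $c$ be its center, and let $R_{opt}\ge r_{opt}$, $w_{opt}=R_{opt}-r_{opt}$ be the radii and width of the optimal annulus centered at $c_{opt}$.

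First, since $S\subseteq c_{opt}+R_{opt}C$, that placement encloses $S$, so $R\le R_{opt}$. Second, $c_{opt}$ lies in the convex hull of $S$, and the placement $c+RC$ is convex and contains $S$, so $c_{opt}\in c+RC$; hence the ray from $c$ through $c_{opt}$, extended past $c_{opt}$, meets $\partial(c+RC)$ at a point $p$ with $c,c_{opt},p$ collinear in that order, and additivity gives $R=\dc(c,p)=\dc(c,c_{opt})+\dc(c_{opt},p)$. Third — the key step — I claim the inner placement $c_{opt}+r_{opt}C$ of the optimal annulus is contained in $c+RC$. Granting this, $p\in\partial(c+RC)$ cannot lie in the interior of $c_{opt}+r_{opt}C$, so $\dc(c_{opt},p)\ge r_{opt}$, and therefore $\dc(c,c_{opt})=R-\dc(c_{opt},p)\le R-r_{opt}\le R_{opt}-r_{opt}=w_{opt}$, which is the statement.

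The main obstacle is the third step, the nesting $c_{opt}+r_{opt}C\subseteq c+RC$. Morally the inner placement is buried in the sample cloud: its interior contains no point of $S$, yet its center $c_{opt}$ lies in $\mathrm{conv}(S)$, and by the $\delta$-uniform sampling assumption of Remark~\ref{rm:deltauniformity} the sample has a point between each pair of corresponding (inner and outer) facets of the optimal annulus, so $S$ surrounds $c_{opt}+r_{opt}C$ facet by facet. I would turn this into $c_{opt}+r_{opt}C\subseteq\mathrm{conv}(S)\subseteq c+RC$; the delicate points are deriving the full inclusion (not merely $c_{opt}\in\mathrm{conv}(S)$) from the surrounding condition and tracking the possible asymmetry of $\dc$. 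If one wants to avoid invoking the sampling assumption, minimum-enclosing-ball optimality places $c$ in the (generalized) convex hull of the contact points of $S$ on $\partial(c+RC)$, which gives a point of $S$ on the far side of $c$ from $c_{opt}$ and a bound of the form $\dc(c,c_{opt})^2\le R_{opt}^2-R^2$ in the symmetric case; but this is weaker than $w_{opt}$ and still has to be combined with the inner-placement estimate to close the proof.
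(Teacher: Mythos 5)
Your skeleton is the same as the paper's: bound $|\MinBall|\le R_{opt}$, use additivity of $\dc$ along the ray from $c$ through $c_{opt}$, and reduce everything to the single geometric fact that the $\MinBall$ reaches at least to the boundary of $\MaxBall(c_{opt})$ in that direction. (The paper phrases this by contradiction, taking the exit point $s$ of the ray on $\partial\MaxBall(c_{opt})$ and showing $R\ge \dc(c,s)=\dc(c,c_{opt})+r_{opt}$; your direct version with $p\in\partial(c+RC)$ and $\dc(c_{opt},p)\ge r_{opt}$ is the same inequality read off the same ray.) So you have correctly isolated the one nontrivial step, and your nesting claim $c_{opt}+r_{opt}C\subseteq c+RC$ is indeed true and suffices.

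The gap is in how you propose to prove it. The intermediate inclusion $c_{opt}+r_{opt}C\subseteq\mathrm{conv}(S)$ does not follow from the assumption that one sample point lies between each pair of corresponding facets: take $C$ a square and one sample just beyond the midpoint of each facet of the inner shell; their convex hull is a diamond that misses the inner square's corners. The inclusion you actually need should be argued facet by facet, not through $\mathrm{conv}(S)$. Writing $C=\{x:\langle n_i,x\rangle\le h_i\}$, the sample point $q_i$ guaranteed between the corresponding $i$-th facets satisfies $\langle n_i,q_i-c_{opt}\rangle\ge r_{opt}h_i$, while $q_i\in c+RC$ gives $\langle n_i,q_i-c\rangle\le Rh_i$; subtracting yields $\langle n_i,c_{opt}-c\rangle\le(R-r_{opt})h_i$ for every $i$, which is exactly the statement that every bounding halfspace of $c+RC$ lies outside the corresponding halfspace of $c_{opt}+r_{opt}C$, i.e.\ $c_{opt}+r_{opt}C\subseteq c+RC$. (Indeed, dividing by $h_i$ and maximizing over $i$ gives $\dc(c,c_{opt})\le R-r_{opt}\le w_{opt}$ directly, so once you set this up you do not even need the ray.) This per-facet argument is what the paper's one-line appeal to Remark~\ref{rm:deltauniformity} (``MinBall cannot shrink past any facets of MaxBall($c_{opt}$)'') is standing in for; your fallback via contact points of the minimum enclosing ball is, as you note, both weaker and unnecessary.
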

\begin{proof}
Recall our assumption from Remark~\ref{rm:deltauniformity}.
By our assumption that at least one sample point lies on each facet,
MinBall cannot shrink past any facets of MaxBall($c_{opt}$).

Suppose for contradiction that $d_C(c,c_{opt})>w_{opt}$. Let $s$ be the point where a ray projected from $c$ through $c_{opt}$ intersects the boundary of MaxBall($c_{opt}$), and let $R$ denote the radius of the MinBall. Observe that $R$ must be large enough for \MinBall\ to contain $s$ and therefore $R\geq d_C(c,s)$.
\begin{align*}
R &\geq d_C(c, c_{opt}) + d_C(c_{opt}, s) \tag*{by collinearity} \\ 
  &> w_{opt} + d_C(c_{opt}, s) \tag*{by assumption} \\
  &= w_{opt} +  r_{opt} \tag*{by MaxBall($c_{opt}$).}
\end{align*}

Thus, since $w_{opt}+r_{opt}=R_{opt}$, we find $R>R_{opt}$, which is a contradiction since $R$ must be the smallest radius of the MinBall across all possible centers. 
Therefore, we have that $d_C(c,c_{opt})$ cannot be larger than $w_{opt}$. 
\end{proof}

Lemma~\ref{lm:searchspace} helps us constrain the region within which $c$ must be contained. Let us now reason about how a given center point, $c$, would serve as an approximation. For convenience, let us define $R:= |\MinBall(c)|$ and  $r:= |\MaxBall(c)|$ as the radii of the MinBall and MaxBall centered at $c$, respectively.

\begin{lemma} \label{lm:dist}
   Suppose $c$ is an arbitrary center-point in our search region,
   and the two directed distances between $c$ and $c_{opt}$ are at most~$t$,
   \textit{i.e.}, $t\geq \max \{ \dc (c,c_{opt}), \dc (c_{opt},c)\}$.
   Then, we have that $|\MWA(c)|\le w_{opt}+2t$.
\end{lemma}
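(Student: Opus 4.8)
The plan is to bound the outer and inner radii of $\MWA(c)$ separately in terms of $R_{opt}$, $r_{opt}$, and $t$, using the triangle inequality for the convex distance function $\dc$. Recall that the triangle inequality $\dc(a,c)\le \dc(a,b)+\dc(b,c)$ holds for any $a,b,c$, since $r_1C + r_2C = (r_1+r_2)C$ for convex $C$ (the collinear case, stated in \Cref{sec:preliminaries} with equality, is a special case). Recall also that $\MWA(c)$ has outer radius $R=\max_{p\in S}\dc(c,p)$ and inner radius $r=\min_{p\in S}\dc(c,p)$, and likewise $R_{opt}=\max_{p\in S}\dc(c_{opt},p)$ and $r_{opt}=\min_{p\in S}\dc(c_{opt},p)$, so that $|\MWA(c)|=R-r$ and $w_{opt}=R_{opt}-r_{opt}$.

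First I would handle the outer radius: for every $p\in S$, the triangle inequality gives $\dc(c,p)\le \dc(c,c_{opt})+\dc(c_{opt},p)\le t+R_{opt}$, and taking the maximum over $p\in S$ yields $R\le R_{opt}+t$. Next, the inner radius: for every $p\in S$ we have $\dc(c_{opt},p)\le \dc(c_{opt},c)+\dc(c,p)\le t+\dc(c,p)$, hence $\dc(c,p)\ge \dc(c_{opt},p)-t\ge r_{opt}-t$; taking the minimum over $p\in S$ yields $r\ge r_{opt}-t$. Combining the two bounds, $|\MWA(c)|=R-r\le (R_{opt}+t)-(r_{opt}-t)=w_{opt}+2t$, as claimed.

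The only point requiring care is the asymmetry of $\dc$: because the reference polytope $C$ need not be centrally symmetric, $\dc(c,c_{opt})$ and $\dc(c_{opt},c)$ may differ, which is precisely why the hypothesis bounds the \emph{maximum} of the two directed distances by $t$ — the outer-radius estimate invokes the direction from $c$ to $c_{opt}$, while the inner-radius estimate invokes the direction from $c_{opt}$ to $c$. Beyond keeping track of this, the argument is a routine double application of the triangle inequality, so I do not expect any substantive obstacle; the statement should follow in a few lines.
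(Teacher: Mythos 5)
Your proof is correct and follows essentially the same route as the paper: both bound the outer radius by $R\le R_{opt}+t$ and the inner radius by $r\ge r_{opt}-t$ and subtract. The only (cosmetic) difference is that the paper derives these bounds via extremal points on rays through $c$ and $c_{opt}$ using collinearity, whereas you apply the general triangle inequality for $\dc$ directly to the sample points, correctly tracking which directed distance is needed where.
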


\begin{proof}

Knowing that all sample points must be contained within the $\MWA$, the $\MWA(c)$ cannot expand past the furthest or closest point in $\MWA$ from $c$ under $d_C$. 
Let us now define these two points and use them to bound the radii for $\MinBall(c)$ and $\MaxBall(c)$.

Let $p$ be the point where the ray from $c$ through $c_{opt}$ intersects the boundary of MinBall($c_{opt}$). $\MinBall(c)$ cannot extend further than $p$. 
\begin{align*}
\dc(c,p) =\dc (c,c_{opt})+\dc (c_{opt},p) &\leq  t + \dc (c_{opt},p)\\
R &\leq R_{opt} +t.
\end{align*}
Conversely, let $q$ be the intersection point where the ray projected from $c_{opt}$ through $c$ intersects the boundary of MaxBall($c_{opt}$),
in which case $\MaxBall(c)$ cannot collapse further than $q$.
\begin{align*}
\dc (c,q) =\dc (c_{opt},q)-\dc (c_{opt},c) &\geq \dc (c_{opt},q) -t\\
r &\geq r_{opt}- t.
\end{align*}
Combining these bounds with the fact that $|MWA(c)|=R-r$ we find that
$|\text{MWA}(c)|\le w_{opt}+2t$. 

\end{proof}

For simplicity, let us consider two points $a,b$ to be \emph{$t$-close} 
(under $C$) whenever $t\geq \max \{ \dc (a,b), \dc (b,a)\}$.

\begin{lemma}
   \label{lm:capprox}
   If $c$ is the center of $\MinBall$, 
then $\MWA(c)$ is a constant factor approximation of the $\MWA$, that is,
$|\MWA(c)|\le b|\MWA|$, for some constant $b\ge1$, under translations.
\end{lemma}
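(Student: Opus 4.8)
The plan is to obtain \Cref{lm:capprox} as a short corollary of \Cref{lm:searchspace} and \Cref{lm:dist}. \Cref{lm:dist} already tells us that if $c$ and $c_{opt}$ are $t$-close then $|\MWA(c)| \le w_{opt} + 2t$, and \Cref{lm:searchspace} gives $\dc(c,c_{opt}) \le w_{opt}$ for the $\MinBall$ center $c$. If $\dc$ were symmetric we could immediately take $t = w_{opt}$ and conclude with $b = 3$. The only missing ingredient is a bound on the \emph{reverse} directed distance $\dc(c_{opt},c)$, since the polyhedral gauge $\dc$ need not be symmetric for general $C$.

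First I would quantify the asymmetry of $\dc$ by a constant depending only on the (fixed) reference polytope $C$. Since $C$ is a closed convex polytope containing the origin in its interior, there exist constants $0 < \rho \le \Lambda$ such that $\rho B \subseteq C \subseteq \Lambda B$, where $B$ is the closed Euclidean unit ball centered at the origin; because $C$ has a constant number $m$ of facets and is fixed, $\rho$ and $\Lambda$ depend only on $C$. From $\rho B \subseteq C$ one gets $\dc(p,q) \le d(p,q)/\rho$ for all $p,q \in \mathbb{R}^d$, and from $C \subseteq \Lambda B$ one gets $\dc(p,q) \ge d(p,q)/\Lambda$. Since the Euclidean distance $d$ is symmetric, combining these two inequalities yields the uniform distortion bound $\dc(a,b) \le \kappa\,\dc(b,a)$ for all $a,b$, with $\kappa := \Lambda/\rho \ge 1$ a constant.

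Then I would chain the pieces together. By \Cref{lm:searchspace}, $\dc(c,c_{opt}) \le w_{opt}$; by the distortion bound, $\dc(c_{opt},c) \le \kappa\,\dc(c,c_{opt}) \le \kappa\, w_{opt}$. As $\kappa \ge 1$, both directed distances between $c$ and $c_{opt}$ are at most $t := \kappa\, w_{opt}$, so $c$ and $c_{opt}$ are $t$-close in the sense of \Cref{lm:dist}. Applying \Cref{lm:dist} gives $|\MWA(c)| \le w_{opt} + 2t = (1 + 2\kappa)\, w_{opt} = (1+2\kappa)\,|\MWA|$, which is the claimed bound with the constant $b = 1 + 2\kappa$.

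The main obstacle is conceptual rather than computational: recognizing that the one-sided bound of \Cref{lm:searchspace} does not directly satisfy the two-sided hypothesis of \Cref{lm:dist}, and that the remedy is to bound the asymmetry of the polyhedral distance by a constant of the shape via the ball-sandwiching argument above; the rest is substitution. I would also note in passing that the standing assumption that the $\MinBall$ center $c$ lies in the convex hull of $S$ is what makes $\MWA(c)$ an admissible annulus, though it plays no active role in the inequality chain.
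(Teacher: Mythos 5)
Your proof is correct and follows essentially the same route as the paper: both arguments reduce to bounding the asymmetry of $\dc$ by a constant depending only on the fixed body $C$ and then feeding \Cref{lm:searchspace} and \Cref{lm:dist} into that bound. The paper derives its asymmetry constant $A$ direction-wise (as the maximum ratio of the radial functions of $C$ and its reflection) and additionally exploits the fact that $c$ is the global $\MinBall$ center to get $R\le R_{opt}$ for free, yielding the slightly sharper $b=1+A$; since the lemma only asks for \emph{some} constant $b\ge 1$, your ball-sandwiching constant $b=1+2\kappa$ with $\kappa=\Lambda/\rho$ is equally valid.
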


\begin{proof}
From Lemma~\ref{lm:searchspace}, we have that $\dc (c, c_{opt}) \leq w_{opt}$. 
If $c$ and $c_{opt}$ are $w_{opt}$-close,
then we can directly apply the second part of Lemma~\ref{lm:dist} to find $r \geq r_{opt} - w_{opt}$ and $R\leq R_{opt}$,
such that $|\MWA(c)|\leq R_{opt}-(r_{opt} - w_{opt})$, thus proving that this is
a 2-approximation. 
If $\dc$ is a metric, then $\dc (c_{opt},c)=\dc (c, c_{opt})$ and this must always be the case.
However, if $\dc (c_{opt},c)> w_{opt}$, then we must use the Euclidean distance to find $\dc (c_{opt},c)$. Let vector $u:=c-c_{opt}$, and let us define unit vectors with respect to $\dc$ and $d_{\overline{C}}$, such that
\begin{gather*}
\widehat{u}_C=\frac{u}{\dc (c_{opt},c)},\hspace{.5cm} \widehat{u}_{\overline{C}}=\frac{\overline{u}}{\dc (c,c_{opt})}\\
||\widehat{u}_C|| \SP \dc (c_{opt},c) ~=~ ||u|| ~=~ ||\widehat{u}_{\overline{C}}|| \SP \dc (c,c_{opt})\\
\dc (c_{opt},c) \leq \frac{||\widehat{u}_{\overline{C}}||}{||\widehat{u}_C||}w_{opt} \tag*{from Lemma~\ref{lm:searchspace}.}
\end{gather*}
Under any convex distance function, 
$\frac{||\widehat{u}_{\overline{C}}||}{||\widehat{u}_C||}$ is bounded from above by 
$A=\max_{v\in\mathbb{R}^d} \frac{||\widehat{v}_{\overline{C}}||}{||\widehat{v}_C||}$,
which corresponds to finding the direction, $v$, of the largest asymmetry in $U_C$. 
Thus, by Lemma~\ref{lm:dist}, $|\MWA(c)|\leq (A+1) w_{opt}$. 
Under our (fixed) polyhedral distance function, $A$ is constant; hence,
MWA($c$) is a constant-factor approximation.
\end{proof}

\paragraph{\boldmath{$(1+\eps)$}-approximation.}\label{sec:approx-algorithm}
Let us now describe how
to compute a \epsapprox\ of $\MWA$.

We begin with Lemma~\ref{lm:epsapprox}, which defines how close to $c_{opt}$ is sufficient for a \epsapprox. In \Cref{thm:epsapprox}, we define a grid of candidate center-points so that any point in the search region has a gridpoint sufficiently close to it.

\begin{lemma}
   \label{lm:epsapprox}

Suppose $c_{opt}$ and $c$ 
are $({\eps w}/(2b))$-close,
where $w = |\MWA(c_M)|$, $c_M$ is the center of $\MinBall$,
and $b$ is the constant from Lemma~\ref{lm:capprox}.
Then, $\MWA(c)$ is a \epsapprox\ of $\MWA$ under translations.

\end{lemma}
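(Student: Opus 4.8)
The plan is to show that if $c$ is $(\eps w/(2b))$-close to $c_{opt}$, then $|\MWA(c)| \le (1+\eps)|\MWA|$. First I would apply Lemma~\ref{lm:dist} with $t = \eps w/(2b)$, which immediately gives
\[
|\MWA(c)| \le w_{opt} + 2t = w_{opt} + \frac{\eps w}{b}.
\]
So the only thing left to do is to relate the ``slack'' term $\eps w / b$ back to $\eps\, w_{opt}$, so that the right-hand side becomes $w_{opt} + \eps\, w_{opt} = (1+\eps) w_{opt} = (1+\eps)|\MWA|$.

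The key observation is that $w = |\MWA(c_M)|$ is the width of the annulus centered at the MinBall center, and by Lemma~\ref{lm:capprox} this is a constant-factor approximation of the optimum, i.e. $w \le b\, w_{opt}$. Dividing through, $w/b \le w_{opt}$, and hence $\eps w / b \le \eps\, w_{opt}$. Substituting into the bound from Lemma~\ref{lm:dist} yields
\[
|\MWA(c)| \le w_{opt} + \frac{\eps w}{b} \le w_{opt} + \eps\, w_{opt} = (1+\eps)\, w_{opt} = (1+\eps)\,|\MWA|,
\]
which is exactly the claimed \epsapprox. This is the whole argument: Lemma~\ref{lm:dist} converts closeness into an additive width bound, and Lemma~\ref{lm:capprox} converts the additive term into a multiplicative one.

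There is really no substantial obstacle here — the lemma is essentially a bookkeeping step that pins down the precise closeness threshold needed downstream when we lay out the grid of candidate centers in \Cref{thm:epsapprox}. The one point that needs a moment of care is making sure the directed distances are handled correctly: ``$(\eps w/(2b))$-close'' already means $t \ge \max\{\dc(c,c_{opt}), \dc(c_{opt},c)\}$ by the definition of $t$-close introduced just before Lemma~\ref{lm:capprox}, so Lemma~\ref{lm:dist} applies verbatim with no asymmetry issues. I would also note in passing that $c$ must still lie in the search region (the convex hull of $S$), which is fine since the grid in the theorem will be restricted to that region. With those remarks in place the proof is complete.
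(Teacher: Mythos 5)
Your proof is correct and follows essentially the same route as the paper's: apply Lemma~\ref{lm:dist} with $t = \eps w/(2b)$ and then use the bound $w \le b\,w_{opt}$ from Lemma~\ref{lm:capprox} to convert the additive slack $\eps w/b$ into $\eps\,w_{opt}$. The paper phrases it in the reverse direction (first deriving the requirement $t \le \eps w_{opt}/2$ and then checking that $\eps w/(2b)$ satisfies it), but the content is identical.
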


\begin{proof}

It suffices to show that the width of our approximation only exceeds the optimal width by a factor of at most $(1+\eps)$.

Assuming $c$ and $c_{opt}$ are $t$-close, and using Lemma~\ref{lm:dist}, 
 we require that $w_{opt}+2t \leq (1+\eps)w_{opt}$,
   \textit{i.e.}, $t \leq \eps w_{opt}/2$.

   Let us then choose $t \leq {\eps w}/(2b)$, 
knowing that $w \leq b w_{opt}$ from Lemma~\ref{lm:capprox},
which is sufficient for achieving a $(1+\eps)$-approximation.

\end{proof}

Knowing how close our approximation's center must be, we can now present a \epsapprox\ algorithm to find a center satisfying this condition.
\begin{theorem}
\label{thm:epsapprox}
   One can achieve a $(1+\eps)$-approximation of the $\MWA$ under
translations in $O(\eps^{-d} n)$ time.
\end{theorem}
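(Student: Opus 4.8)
The plan is to build an explicit grid of candidate centers covering the search region for $c_{opt}$, evaluate $\MWA(\cdot)$ at each grid point, and return the best. By \Cref{lm:epsapprox}, it suffices to guarantee that the grid contains some point that is $(\eps w/(2b))$-close to $c_{opt}$, where $w = |\MWA(c_M)|$ is the width achieved by the center $c_M$ of the $\MinBall$. First I would invoke \Cref{lm:searchspace}, which tells us $\dc(c_M, c_{opt}) \le w_{opt}$, so $c_{opt}$ lies in the placement $c_M + w_{opt}\,U_C$. Since $w_{opt} \le w \le b\,w_{opt}$ by \Cref{lm:capprox}, we know $w_{opt} \ge w/b$, and $c_{opt}$ lies in $c_M + w\,U_C$ (a region whose diameter is $\Theta(w)$ under the Euclidean metric, with the hidden constant depending only on the fixed polytope $C$). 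I would then overlay an axis-aligned Euclidean grid of spacing $\Theta(\eps w / b)$ over a bounding box of this region; this yields $O((b/\eps)^d) = O(\eps^{-d})$ grid points since $b$ is a constant depending only on the fixed $C$. For the nearest grid point $g$ to $c_{opt}$, the Euclidean distance $\|g - c_{opt}\|$ is $O(\eps w / b)$, and converting to the convex distance $d_C$ in either direction loses only a constant factor (again depending only on $C$), so by tuning the grid-spacing constant we can make $g$ be $(\eps w / (2b))$-close to $c_{opt}$, which is exactly the hypothesis of \Cref{lm:epsapprox}.

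Next I would address a small bootstrapping issue: the grid spacing and extent are defined in terms of $w = |\MWA(c_M)|$, which we do not know a priori. But $c_M$ (the center of the $\MinBall$) and hence $w$ can be computed first: the $\MinBall$ of $n$ points under a fixed convex distance function is an LP-type problem of constant combinatorial dimension and can be found in $O(n)$ time, after which $|\MWA(c_M)|$ is computed in $O(n)$ time as noted in the preliminaries. So the algorithm is: (i) compute $c_M$ and $w$; (ii) generate the $O(\eps^{-d})$ grid points in $c_M + O(w)\,U_C$; (iii) for each grid point $g$, compute $|\MWA(g)| = \max_{p\in S} d_C(g,p) - \min_{p\in S} d_C(g,p)$ in $O(n)$ time; (iv) return the grid point minimizing this width. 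The total running time is $O(n) + O(\eps^{-d}) \cdot O(n) = O(\eps^{-d} n)$.

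For correctness, $c_{opt}$ lies in the search region, so some grid point $g$ is $(\eps w/(2b))$-close to it; by \Cref{lm:epsapprox}, $\MWA(g)$ is a $(1+\eps)$-approximation of $\MWA$, and since we return the grid point of minimum width, the output is at least as good. One subtlety to handle is the constraint that the returned center must lie within the convex hull of $S$ (per the definition of $\MWA$): I would simply restrict attention to grid points inside $\mathrm{conv}(S)$, which does not hurt us because $c_{opt}$ itself is in $\mathrm{conv}(S)$ and is well-approximated by a nearby grid point that can be taken inside the hull (or one shows the hull constraint is automatically satisfiable after a negligible perturbation).

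The main obstacle is the careful accounting of constants in the conversion between Euclidean distance and the (possibly asymmetric) convex distance $d_C$, in both directions, to certify the precise $(\eps w/(2b))$-closeness demanded by \Cref{lm:epsapprox} rather than merely $O(\eps w)$-closeness; this is where the constant $A$ (the maximal asymmetry of $U_C$) from the proof of \Cref{lm:capprox} and the inradius/circumradius of $C$ enter, and one must verify they are all absorbed into the constant hidden in the $O(\eps^{-d})$ grid size. The rest is routine.
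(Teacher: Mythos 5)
Your proposal is correct and follows essentially the same route as the paper: compute the $\MinBall$ center in $O(n)$ time, use \Cref{lm:searchspace} to confine $c_{opt}$ to $c_M + w\,C$, overlay a grid fine enough that some gridpoint is $(\eps w/(2b))$-close to $c_{opt}$ per \Cref{lm:epsapprox}, and evaluate $\MWA$ at each of the $O(\eps^{-d})$ gridpoints in $O(n)$ time. The paper handles the Euclidean-versus-$d_C$ conversion that you flag as "the main obstacle" by choosing the grid cell to be the largest cube inscribed in $(\eps w/(2b))\,C$, so that $d_C \le d_q$ and the symmetry of the cubic metric gives $(\eps w/(2b))$-closeness in both directions directly, but this is the same constant-factor bookkeeping you describe.
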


\begin{proof}
   The MinBall can be computed in $O(n)$ time~\cite{das2016}. 
   By Lemma~\ref{lm:searchspace}, we have that
   $d_C(c,c_{opt}) \leq w_{opt}$, where $c$ is the MinBall center. 
   This implies that $c_{opt}$ must lie within the placement $c+w_{opt} C$ or more generously in $P$, defined as $c+w C$.
   Furthermore, from Lemma~\ref{lm:epsapprox}, we know 
that being $({\eps w}/(2b))$-close to~$c_{opt}$ 
suffices for an $(1+\eps)$-approximation. 
Therefore, overlaying a grid $G$ that covers $P$, such that any point in $p\in P$ is $({\eps w}/(2b))$-close to a gridpoint, guarantees the existence of a point $g\in G$ for which $\MWA(g)$ is a $(1+\eps)$-approximation. 

   Since $P$ and $({\eps w}/(2b))$-closeness 
are both defined under $\dc$, we translate this to a cubic grid for simplicity.
Let $Q$ be the smallest cube enclosing $P$ and $q$ be the largest cube 
enclosed by $({\eps w}/(2b)) C$.
   Let us now define a grid, $G$, to span over $Q$ with cells the size of $q$.

   This grid, $G$, has $O(Fb/\eps)$ gridpoints per direction and $O(F^db^d\eps^{-d})$ gridpoints in total,
   where $F$ corresponds to the fatness of $C$ under the cubic distance function.

   Let us define the cubic distance function, $d_q$, with unit cube $U_q=q \cdot (2b)/(\eps w)$, such that $U_q$ is the largest cube enclosed by $C$. 
   
   The grid $G$ guarantees that for every point $p$, there exists a gridpoint $g\in G$ such that $d_q(p,g)\leq {\eps w}/(2b)$.
   Since the unit cube is contained within the unit polyhedron, we have that $\dc(a,b) \leq d_q(a,b)\ \forall a,b$; and since $d_q$ defines a metric, $p$ must also 
be $({\eps w}/(2b))$-close under~$\dc$.
   Finding the gridpoint providing the \epsapprox\ takes 
$O(F^db^d\eps^{-d}n)$ time,\footnote{For metrics, MinBall provides a 2-approximation, thus $b{=}2$. For non-metrics, we can remove this constant by first using this algorithm with $\eps{=}1$ in order to find a 2-approximation in linear-time, and using this approximation for gridding in the main step.}
 which, under a fixed $\dc$, is $O(\eps^{-d}n)$ time.
\end{proof}

\paragraph{Faster grid-search in two dimensions.}\label{sec:improving}
The algorithm of Theorem~\ref{thm:epsapprox} recalculates the MWA at every gridpoint. However, small movements along the grid should not affect the MWA much. We use this insight to speed up MWA recalculations for two dimensions. 

Let us first define the \emph{contributing edge} of a sample point, $p\in S$, as the edge of $C+g$ intersected by the ray emanating from 
a gridpoint, $g$, towards $p$. Under this center-point, $p$ will only directly affect the placement of the contributing edge. 
Observe that given vectors $\ora{v}\in C$, defined as the vectors directed from the center towards each vertex, the planar subdivision, created by rays for each $\ora{v}$ originating from $g$, separates points by their contributing edge.
For any two gridpoints, $g_1$ and $g_2$, and rays projected from them parallel to $\ora{v}$, any points within these two rays will contribute to different edges  under $g_1$ and $g_2$.
We denote this region as the \emph{vertex slab} of vertex $v$, 
and the regions outside of this as \emph{edge slabs}. 
Points within an edge slab contribute to 
the same edge under both gridpoints,
maintaining the constraints this imposes on the MWA, can therefore be achieved with the two extreme points per edge slab. 
If we consider vertex slabs for all $g\in G$, we must be able to quickly calculate the strictest constraints imposed by points in a subset of vertex slabs.
An example of the planar subdivision for two points is shown in 
Figure~\ref{fig:slabs}. 
\begin{figure}[tb]
\centering
\includegraphics[width=0.42\linewidth]{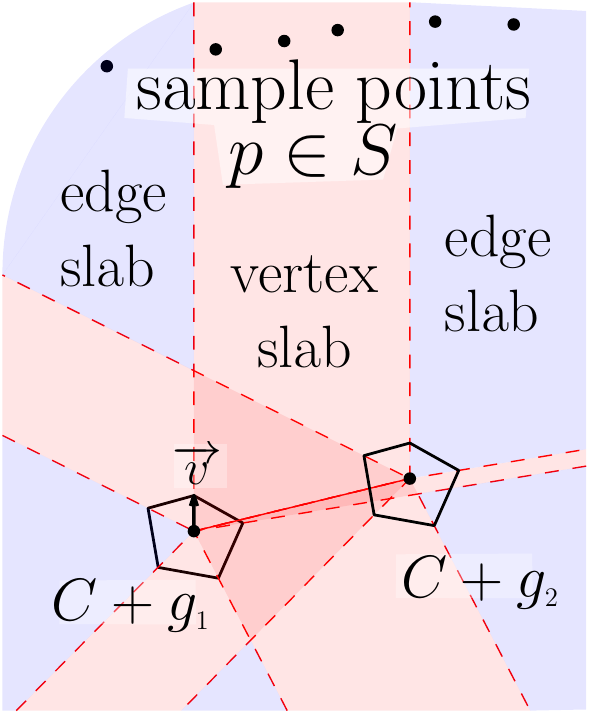}	
\caption{
Planar subdivision defining vertex slabs (red) and edge slabs (blue) for two candidate center-points, and showing membership of some sample points.
}
\label{fig:slabs}
\end{figure}

Given a grid $G$, we write $g_{i,j}\in G$ to be the gridpoint at index $(i, j)$. Consider the set of all grid lines $L_{v}$ defined by rays parallel to $\ora{v}$ starting at each gridpoint. $L_v$ defines a planar subdivision corresponding to the edge slabs between gridpoints.  Before attempting to identify the extreme points for each edge slab, we first need to find a quick way to identify the slab in $L_v$ that contains a given sample-point, $p$. 

\begin{lemma}
\label{lm:place}
For a specific vector $\ora{v}$ and an $m\times m$ grid, we can identify which slab contains a sample point, $p$, in $O(\log m)$ time with $O(m^2)$-time preprocessing.
\end{lemma}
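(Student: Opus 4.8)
The plan is to reduce the point-location problem to two nested one-dimensional searches by exploiting the fact that the grid lines in $L_v$ form a highly structured (nearly-arrangement-of-a-pencil-of-parallel-lines) subdivision. First I would fix the direction $\ora{v}$ and set up two coordinate axes: one axis $\xi$ running parallel to $\ora{v}$, and a transversal axis $\eta$ (any direction not parallel to $\ora{v}$ suffices; taking $\eta$ along a grid row is convenient). Each of the $m^2$ gridpoints $g_{i,j}$ emits a ray parallel to $\ora{v}$, i.e.\ a half-line of constant... well, not constant $\eta$, since the rays are parallel they all have the same direction, so each ray is determined by a single scalar — its $\eta$-intercept (equivalently, its signed perpendicular offset from a fixed reference line parallel to $\ora{v}$). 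So the whole family $L_v$ is, combinatorially, just a set of at most $m^2$ parallel lines, and the ``slabs'' are the $O(m^2)$ gaps between consecutive ones.

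Next I would observe that the $\eta$-offsets of the $m^2$ rays are $\{ \alpha i + \beta j : 0 \le i,j < m\}$ for fixed reals $\alpha,\beta$ (the offsets contributed by a unit step in the $i$- and $j$-directions of the grid). In preprocessing I would compute and sort these $m^2$ offset values into a single sorted array, in $O(m^2 \log m)$ time — or, if one wants to match the stated $O(m^2)$ preprocessing exactly, note that this is a set of $m$ arithmetic progressions each of length $m$ with common difference $\alpha$, so the sorted order can be produced by an $m$-way merge of already-sorted lists in $O(m^2)$ time. Store the sorted offsets in an array $\Pi$. Given a query point $p$, compute its $\eta$-offset $\eta(p)$ in $O(1)$ time and binary-search $\Pi$ in $O(\log m^2) = O(\log m)$ time to find the two consecutive offsets bracketing it; this identifies the slab of $L_v$ containing $p$. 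That yields the claimed $O(\log m)$ query time with $O(m^2)$ preprocessing.

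The main obstacle — and the only place the argument needs care — is the degenerate/collinearity bookkeeping: several distinct gridpoints $g_{i,j}$ can emit rays with the \emph{same} offset $\alpha i + \beta j$ (this happens whenever $\alpha/\beta$ is rational with small denominator, and always happens if $\ora{v}$ is axis-aligned), so $\Pi$ may have fewer than $m^2$ distinct entries and ``the slab'' must be defined as a maximal gap between distinct consecutive offsets; I would also need to handle the case where $\eta(p)$ equals one of the offsets exactly (assign $p$ to an adjacent slab by a consistent tie-breaking rule) and the unbounded two outer regions. None of this changes the asymptotics, but it is what makes the lemma a lemma rather than a one-line remark. A minor secondary point is that $\ora{v}$ ranges over only the $O(1)$ vertex directions of the fixed polytope $C$, so repeating the construction for each $\ora{v}$ keeps the total preprocessing $O(m^2)$ and is consistent with the intended use in the faster two-dimensional algorithm.
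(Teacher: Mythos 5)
Your overall approach is the same as the paper's: project the rays of $L_v$ onto a direction transversal to $\ora{v}$ so that each ray is a single scalar offset, sort the $m^2$ offsets, and answer a query by one binary search in $O(\log m^2)=O(\log m)$ time. The query step and the degeneracy bookkeeping are fine (and more explicit than in the paper). The one genuine gap is in the preprocessing bound, which is the only nontrivial content of the lemma: you claim the sorted order of the $m^2$ offsets $\{\alpha i+\beta j\}$ can be obtained by ``an $m$-way merge of already-sorted lists in $O(m^2)$ time.'' That is false as stated: merging $m$ sorted lists with $m^2$ total elements costs $\Theta(m^2\log m)$ with a heap-based $m$-way merge, and $\Omega(m^2\log m)$ comparisons are required for arbitrary sorted lists, so a generic merge cannot give you $O(m^2)$. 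You would land back at the $O(m^2\log m)$ bound that the lemma is specifically trying to beat.

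The fix uses exactly the structure you already identified but did not exploit: all $m$ progressions share the \emph{same} common difference $\alpha$ (taking, without loss of generality, $\alpha\geq\beta$, i.e.\ the larger of the two projected basis steps). Partition the offset axis into consecutive windows of width $\alpha$; each window contains at most one element of each progression, and the element of progression $j$ always lands at position $(\beta j)\bmod\alpha$ within its window. Hence the relative order of the $m$ progressions is identical in every window, so it suffices to bucket all $m^2$ offsets into their windows in $O(m^2)$ time, sort the $m$ residues of one full window in $O(m\log m)$ time, and replicate that permutation across all windows. This periodicity argument --- not a merge --- is what breaks the comparison-sorting lower bound and delivers the claimed $O(m^2)$ preprocessing; it is also precisely the argument the paper gives.
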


\begin{proof}
Consider the orthogonal projection of grid lines in $L_v$ onto a line $\ora{v_{\bot}}$ perpendicular to $\ora{v}$, the order in which these lines appear in $\ora{v_{\bot}}$ defines the possible slabs that could contain $p$ (see \Cref{fig:pointloc}). We can project a given grid line $l\in L_{v}$ onto $\ora{v_{\bot}}$ in constant time. With the grid lines in sorted order, we can perform a binary search through the $m^2$ points in $O(\log m)$ time to identify the slab containing $p$.

Using general sorting algorithms, we could sort the grid lines in $O(m^2\log m)$ time. However, since these lines belong to a grid, we can exploit the uniformity to sort them in only $O(m^2)$ time. Consider the two basis vectors defining gridpoint positions $\hat{\imath}=g_{(1,0)}-g_{(0,0)}$ and $\hat{\jmath}=g_{(0,1)}-g_{(0,0)}$, and their sizes after orthogonal projection onto $\ora{v_{\bot}}$, $|\hat{\imath}_{\bot}|$, and $|\hat{\jmath}_{\bot}|$. Without loss of generality, assume that $|\hat{\imath}_{\bot}|\geq|\hat{\jmath}_{\bot}|$, in which case grid lines originating from adjacent gridpoints in the same row must be exactly $|\hat{\imath}_{\bot}|$ away. In addition, any region $|\hat{\imath}_{\bot}|$-wide, that does not start at a grid line, must contain at most a single point from each row. 
Furthermore, since points in the same row are always $|\hat{\imath}_{\bot}|$ away, they must appear in the same order in each region.

We can therefore initially split $\ora{v_{\bot}}$ into regions $|\hat{\imath}_{\bot}|$ wide. Sorting the grid lines $l\in L_v$ into their region can therefore be calculated in $O(m^2)$ time. Now we can sort the $m$ points in the region containing points from every row in $O(m\log m)$ time. Since each region has the same order, we can place points in other regions by following the order found in our sorted region, thus 
taking $O(m^2)$ preprocessing time for sorting the points.
\end{proof}

\begin{figure}
\begin{subfigure}{0.49\linewidth}
\centering
\includegraphics[width=\linewidth]{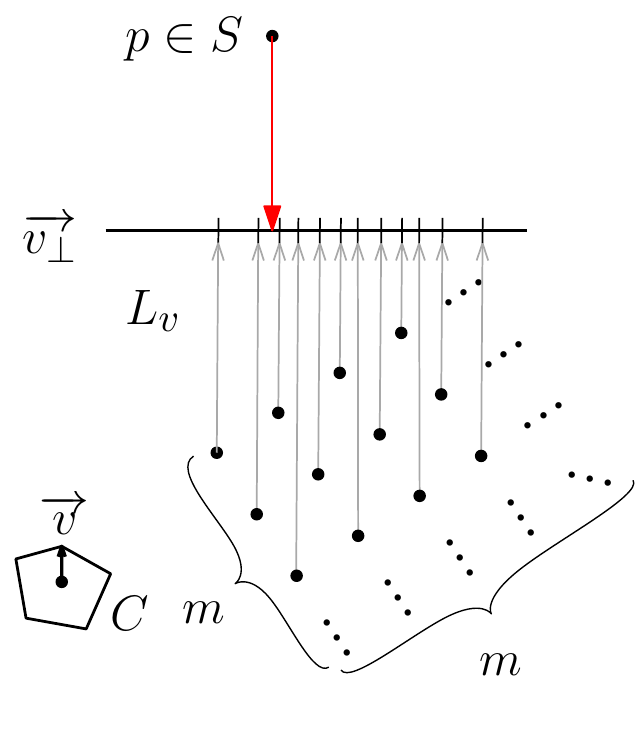}
\caption{A demonstration of the point location problem with the subdivision, $L_v$, and a visualization of the gridpoints and sample point projections onto $\ora{v_{\bot}}$.}
\label{fig:pointloc}
\end{subfigure}
\hfill
\begin{subfigure}{0.48\linewidth}
\centering
\includegraphics[width=\linewidth]{}
\caption{Finding the extreme points (red) under $\ora{e}_L$ in subdivision $L_v$ for each region (solid) and for all regions to its left (dashed).}
\label{fig:extrememost}
\end{subfigure}
\caption{A visual representation of the projections involved while point locating within the vertex slabs and while finding the extreme points in each slab.}
\end{figure}
Recall that points to the left of a given line $l\in L_v$ contribute to the edge to the left of $v$, \textit{i.e.}, all points belonging to slabs to the left of $l$. We can therefore isolate the points in these slabs causing the largest potential change in MWA.

\begin{lemma}
\label{lm:2dconst}
For a vertex $v \in C$ and grid line $l\in L_v$ through gridpoint $g$, let $l_L$ and $l_R$ refer to the slabs on the subdivision imposed by $L_v$ immediately to the left and right of $l$, respectively. Assuming $l_L$ maintains the points to the left of $l$ imposing the strictest constraints on $\MWA(g)$, and $l_R$ to the right, one can calculate $\MWA(g)$ in $O(1)$ time.
\end{lemma}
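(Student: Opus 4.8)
The plan is to show that moving from one gridpoint $g$ to an adjacent one only changes the contributing-edge assignment for sample points that cross a vertex slab boundary, and that for each edge the constraint on $\MWA(g)$ is determined by just the two extreme points (under the appropriate direction) among points currently assigned to that edge. Since $C$ has $m = O(1)$ vertices and edges, maintaining $O(1)$ extreme-point witnesses per edge slab suffices, and from these $O(1)$ witnesses the inner radius $r = |\MaxBall(g)|$ and outer radius $R = |\MinBall(g)|$ — hence $|\MWA(g)| = R - r$ — can be read off in $O(1)$ time.

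First I would fix the vertex $v$ and the associated direction $\ora{v}$, and recall from the discussion preceding the lemma that the planar subdivision $L_v$ partitions the plane into edge slabs, within each of which every sample point contributes to the same edge of $C + g$ for every gridpoint $g$ in the corresponding column/row of the grid. Thus, for the fixed gridpoint $g$ and the slabs $l_L, l_R$ flanking the grid line $l$ through $g$, the points contributing to the edge just left of $v$ are exactly those lying in slabs at or left of $l$, and symmetrically on the right. By hypothesis, $l_L$ already stores the points left of $l$ that impose the strictest constraints on $\MWA(g)$ for that edge, and $l_R$ the analogous data on the right; so for each of the $O(1)$ edge directions we have a constant-size set of candidate extreme points.

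Next I would translate "strictest constraint" into the two concrete quantities needed: for each edge $e$ of $C + g$, the point $p$ contributing to $e$ that maximizes $\dc(g,p)$ pushes the outer shell outward, and the one that minimizes $\dc(g,p)$ pulls the inner shell inward. Because all points assigned to a given edge lie in a fixed cone from $g$ and $\dc$ restricted to that cone is a linear functional of the projection onto the outward edge normal, the max and min over that set are attained at the two extreme points already maintained in $l_L$ (and $l_R$). Taking $R = \max_e$ (of the per-edge maxima) and $r = \min_e$ (of the per-edge minima) over the $O(1)$ edges, and recalling from \Cref{sec:preliminaries} that $\dc$ and hence each such evaluation costs $O(m) = O(1)$ time, we obtain $|\MWA(g)| = R - r$ in $O(1)$ total time.

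\textbf{The main obstacle} is justifying cleanly that only the two extreme points per edge slab are ever needed — i.e.\ that no interior point of a slab can ever become the binding constraint, and that the "extreme" ordering used to pick those two points is exactly the ordering that determines $\dc(g,\cdot)$ for all gridpoints sharing that slab simultaneously. This hinges on the linearity of $\dc(g,\cdot)$ along the edge's supporting hyperplane and on the fact that the slab boundaries are precisely the loci where a point switches contributing edge; I would make this explicit by writing $\dc(g,p)$ for $p$ in the cone of edge $e$ as $\langle p - g, \nu_e\rangle / h_e$, where $\nu_e$ is the outward normal of $e$ and $h_e$ its (constant) support value, so that maximizing/minimizing over the slab reduces to an extreme-point query answered in $O(1)$ from the stored witnesses.
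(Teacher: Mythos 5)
Your proposal is correct and follows essentially the same route as the paper: both reduce the computation of $\min_p \dc(g,p)$ and $\max_p \dc(g,p)$ to a constant-size candidate set consisting of the stored extreme points in $l_L$ and $l_R$ for each vertex plus the two extreme points per edge slab, and evaluate $\dc$ on each candidate in $O(1)$ time. The only difference is that you spell out \emph{why} two extreme points per edge suffice (linearity of $\dc(g,\cdot)$ on the cone of each facet), a justification the paper leaves implicit.
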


\begin{proof}
Finding $\min_{p\in S}  d_C(g,p)$ and $\max_{p\in S}  d_C(g,p)$ can now be achieved by optimizing only over the set of points in $\{ l_L \cup l_R, \, \forall v {\in} C\}$ and all points in edge slabs. This set would contain two points per vertex and two points per edge, yielding a constant number of points. Thus, MWA($g$) can be found in constant time.
\end{proof}
\begin{theorem}
\label{thm:2dapprox}
A \epsapprox\ of the $\MWA$ in two dimensions 
can be found in $O(n\log \eps^{-1} + \eps^{-2})$ time under translations.
\end{theorem}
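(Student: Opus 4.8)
The plan is to combine the gridding idea from \Cref{thm:epsapprox} (restricted to $d=2$) with the faster per-gridpoint $\MWA$ evaluation made possible by Lemmas~\ref{lm:place} and~\ref{lm:2dconst}. First I would set up the same grid $G$ as in \Cref{thm:epsapprox}: compute the $\MinBall$ center $c$ in $O(n)$ time, observe via \Cref{lm:searchspace} that $c_{opt}\in c+wC$, and lay down a cubic grid whose cells are small enough that every point of the search region is $(\eps w/(2b))$-close to a gridpoint under $\dc$. In two dimensions this grid has $m\times m$ gridpoints with $m=O(\eps^{-1})$ (absorbing the constants $F$ and $b$, with the usual footnote trick to kill $b$ for non-metrics), hence $O(\eps^{-2})$ gridpoints total. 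By \Cref{lm:epsapprox}, some gridpoint $g$ has $\MWA(g)$ a \epsapprox, so it suffices to evaluate $\MWA(g)$ at all $O(\eps^{-2})$ gridpoints and return the best.

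The core of the argument is that, after preprocessing, each $\MWA(g)$ can be evaluated in amortized $O(1)$ time rather than the $O(n)$ of the naive approach. For each of the $O(1)$ vertex-directions $\ora{v}$ of $C$, I would apply \Cref{lm:place}: with $O(m^2)=O(\eps^{-2})$ preprocessing we can, for each sample point, locate the slab of $L_v$ containing it. Summed over all $n$ points and all $O(1)$ directions this is an $O(n\log m)=O(n\log\eps^{-1})$ charge. Knowing each point's slab in every $L_v$ simultaneously pins down, for each point, its \emph{contributing edge} at every gridpoint up to the left/right ambiguity across a single grid line; then for each slab I keep only the two extreme points (the ones giving the strictest inner/outer constraints), and for each edge slab likewise two extreme points. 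This bookkeeping — propagating, as $g$ sweeps across the grid, the ``strictest constraint among all slabs to the left of $l$'' and symmetrically to the right, as suggested by \Cref{fig:extrememost} — can be precomputed by a prefix-scan over the sorted slabs in $O(m^2)=O(\eps^{-2})$ total time across all directions. With these precomputed extreme points in hand, \Cref{lm:2dconst} gives $\MWA(g)$ in $O(1)$ time per gridpoint, for $O(\eps^{-2})$ time over the whole grid.

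Adding the pieces: $O(n)$ for the $\MinBall$, $O(n\log\eps^{-1})$ for locating all sample points in all $L_v$ subdivisions, $O(\eps^{-2})$ for building the grid and the prefix-scan tables of extreme points, and $O(\eps^{-2})$ for sweeping the grid and evaluating every $\MWA(g)$ via \Cref{lm:2dconst}; the total is $O(n\log\eps^{-1}+\eps^{-2})$, as claimed.

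The step I expect to be the main obstacle is the careful handoff between Lemmas~\ref{lm:place} and~\ref{lm:2dconst}: I need to argue that maintaining only a constant number of ``extreme'' sample points per slab (two per vertex slab and two per edge slab) genuinely preserves the value of $\min_{p\in S}d_C(g,p)$ and $\max_{p\in S}d_C(g,p)$ at \emph{every} gridpoint $g$, not just at one. The subtlety is that which edge of $C+g$ a point contributes to changes as $g$ moves, but only when $g$ crosses one of the grid lines in some $L_v$; within an edge slab the contributing edge is fixed, so the extreme two points there suffice, while the vertex slabs collect exactly the points whose contribution is in flux, and those must be handled so that the per-gridpoint update is $O(1)$ amortized rather than $O(\text{slab size})$. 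Making this amortization rigorous — essentially showing that each sample point is ``extreme'' in only $O(1)$ of the relevant prefix/suffix structures, or that a monotone scan over the grid touches each point a bounded number of times — is where the real care is needed; the rest is routine grid-counting as in \Cref{thm:epsapprox}.
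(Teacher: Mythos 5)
Your proposal is correct and follows essentially the same route as the paper: the grid and correctness guarantee are inherited from Theorem~\ref{thm:epsapprox} and Lemma~\ref{lm:epsapprox}, point location in the slab subdivisions $L_v$ is done via Lemma~\ref{lm:place} in $O(n\log\eps^{-1})$ total, and your ``prefix-scan'' propagating the strictest constraints leftward and rightward is exactly the paper's left-to-right and right-to-left passes over the slabs, after which Lemma~\ref{lm:2dconst} yields each $\MWA(g)$ in $O(1)$ time. The amortization worry you flag at the end is resolved the same way in the paper --- each sample point updates only its own slab's two extremes in $O(1)$ time, and the two sweeps then aggregate these in $O(\eps^{-2})$ time independent of $n$ --- so no additional argument is needed beyond what you already sketched.
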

\begin{proof}
For each vertex, $v$, we use Lemma~\ref{lm:place} to identify the slab for every sample point. 
For each slab, we maintain only the two extreme points for each of the edges incident on $\ora{v}$. Let $\ora{e}_L\in C$ denote the vector describing the edge incident on $\ora{v}$ from the left, and vice versa for $\ora{e}_R\in C$ incident from the right. 
For each slab, we maintain only points which when projected in the relevant direction, $\ora{e}$, cause the furthest and closest intersections with the boundary (shown for $\ora{e}_L$ in Figure~\ref{fig:extrememost}).

With a left-to-right pass, we update a slab's extreme points relative to $\ora{e}_L$ to maintain the extreme points for itself and slabs to its left. With a right-to-left pass, we do the same for $\ora{e}_R$ and maintain points in its slab and slabs to its right.

Thus, for each vertex, we create the slabs in $O(\eps^{-2})$ time, place every sample point in its slab in $O(n\log \eps^{-1})$ time, and maintain only the extreme points per slab in constant time per sample point. With $O(\eps^{-2})$ time to update each slab after processing the sample points, we can update the slabs such that they hold the extreme points across all slabs to their left or right (relative to $\ora{e}_L$ and $\ora{e}_R$, respectively).

For each edge slab, finding the extreme points is much simpler since finding $\min d_C(g,p)$ and $\max d_C(g,p)$ will always be based on the same contributing facet for all points within the same edge slab .

Thus, after finding the extreme points in both vertex slabs, we can calculate $\MWA(g)$ in constant time as described in Lemma~\ref{lm:2dconst}. Taking $O(\eps^{-2})$ time to find $\min_{g\in G} \MWA(g)$,  which by Theorem~\ref{thm:epsapprox} provides a \epsapprox\ of the minimum width annulus, and considering the $O(n\log \eps^{-1})$ pre-processing time completes the proof of the claimed time bound.
\end{proof}

\section{Approximating MWA allowing rotations}
\label{sec:rotations}

In this section we consider rotations.
As with Lemma~\ref{lm:epsapprox}, our goal is to find the maximum tolerable rotation sufficient for a \epsapprox. 
Observe that when centered about the global optimum, the solution found under both rotation and translation is at least as good as the solution found solely through rotation (\textit{i.e.}, under a fixed center). We will therefore first prove necessary bounds for a \epsapprox\ under rotation only with the understanding that they remain when also allowing for translation.

Consider the polyhedral cone around $\ora{v}$, and define the \emph{bottleneck angle} as the narrowest angle between a point on the surface of the polyhedral cone and $\ora{v}$. Let~$\theta$ be the smallest bottleneck angle across all $\ora{v}\in C$. Let MWA$_{\alpha}(c$) denote the MWA centered at $c$, where C has been rotated by angle $\alpha$. Let us also use similar notations for MinBall and MaxBall. 

\begin{lemma}
\label{lm:scale}
Rotating by $\alpha$ causes $\MinBall_{\alpha}(c)$ to grow by at most $\frac{\sin(\pi-\theta-\alpha)}{\sin{\theta}}$ (and the reciprocal for $\MaxBall_{\alpha}(c)$).
\end{lemma}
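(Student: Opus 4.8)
The plan is to analyze how a single facet of $C$ moves under a small rotation, since the growth of $\MinBall_\alpha(c)$ is controlled by whichever facet is pushed outward the most, and by concavity/convexity arguments the worst case will be governed by the sharpest vertex cone, i.e., the one realizing the bottleneck angle $\theta$. First I would fix the center $c$ and a sample point $p$ that, before rotation, determines the radius of $\MinBall(c)$ through some contributing facet $F$; the relevant quantity is how much the supporting hyperplane of $F$ must translate outward (in the $d_C$ sense, i.e., scaled radially from $c$) to again contain the rotated copy of $F$ in the direction of $p$. Equivalently, I would track a ray from $c$ in a fixed direction and measure the ratio of the distance from $c$ to the rotated facet versus the distance to the original facet.

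The key geometric step is a planar trigonometric computation inside the two-dimensional plane spanned by $\ora{v}$ (the cone axis / a vertex direction) and the rotation direction. Consider the triangle formed by $c$, the relevant point on the boundary of the original unit ball along direction $\ora{v}$, and the corresponding point after rotation by $\alpha$. The angle at $c$ is $\alpha$; the angle that the facet's supporting line makes with $\ora{v}$ is, in the extremal case, exactly the bottleneck angle $\theta$ (this is the narrowest angle between the cone surface and the axis, hence the facet whose normal is most nearly perpendicular to $\ora{v}$ — the "flattest" incident facet). Applying the law of sines to this triangle, the side opposite the angle at $c$ scaled appropriately yields a stretch factor of $\sin(\pi - \theta - \alpha)/\sin\theta$: the original radial distance corresponds to $\sin\theta$ (angle $\theta$ being opposite it), and the rotated radial distance corresponds to $\sin(\pi-\theta-\alpha)$, the remaining angle of the triangle. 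I would then argue that no facet can be pushed out by more than this ratio: any other vertex cone has bottleneck angle at least $\theta$, and the function $\sin(\pi-\phi-\alpha)/\sin\phi$ is decreasing in $\phi$ on the relevant range, so $\theta$ is the worst case. The reciprocal statement for $\MaxBall_\alpha(c)$ follows by the dual argument — a rotation that pushes one facet out pulls the "tightest" inner facet in, and by symmetry of the rotation (rotating by $\alpha$ vs.\ $-\alpha$, or equivalently looking at the inner contact rather than the outer) the shrink factor is the reciprocal of the same expression, so $\MaxBall_\alpha(c) \geq \frac{\sin\theta}{\sin(\pi-\theta-\alpha)}\,\MaxBall(c)$.

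I would organize the write-up as: (1) reduce to a single facet and a single radial direction; (2) set up the planar triangle with angles $\alpha$, $\theta$, and $\pi-\theta-\alpha$ and invoke the law of sines to get the stretch factor for that facet; (3) show $\theta$ (the minimum bottleneck angle) is extremal among all facets via monotonicity of $\phi \mapsto \sin(\pi-\phi-\alpha)/\sin\phi$; (4) take the max over sample points / facets to conclude the $\MinBall$ bound, and dualize for $\MaxBall$.

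The main obstacle I anticipate is step (1)–(2): making precise exactly which planar cross-section to take and confirming that the angle appearing in the triangle is genuinely the bottleneck angle as defined (narrowest angle between a point on the cone surface and $\ora{v}$), rather than, say, the dihedral angle at the vertex or the angle of the facet normal. One has to be careful that "the facet a point contributes to" before and after rotation can change, and that the bound should hold uniformly; I would handle this by noting that an infinitesimal/worst-case analysis over all directions and all incident facets at the relevant vertex reduces exactly to the smallest such cone angle $\theta$, and that the stated bound is monotone in $\alpha$ so it dominates the true worst case over the whole rotation by $\alpha$. A secondary subtlety is ensuring $\pi - \theta - \alpha$ stays in $(0,\pi)$ so the sine is positive, which implicitly requires $\alpha < \pi - \theta$; this should be fine in the regime of small rotations that the approximation scheme actually uses.
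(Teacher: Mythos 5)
Your proposal is correct and follows essentially the same route as the paper: both set up the planar triangle with vertices at $c$, the original boundary point along $\ora{v}$, and its image under the rotation by $\alpha$, assign the angles $\alpha$, $\theta$, and $\gamma=\pi-\theta-\alpha$, and apply the law of sines to obtain the stretch factor $\sin(\pi-\theta-\alpha)/\sin\theta$, with the dual/reciprocal argument for $\MaxBall_\alpha(c)$. Your explicit monotonicity check that $\phi\mapsto\sin(\pi-\phi-\alpha)/\sin\phi$ is decreasing (so the minimum bottleneck angle $\theta$ is extremal) is a small addition the paper only asserts, but the argument is otherwise the same.
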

\begin{proof}
Similarly to Lemma~\ref{lm:dist}, all sample points must be contained within $\MinBall(c)$. MinBall$_{\alpha}(c)$ can only expand to the furthest point within $\MinBall(c)$ under the new rotated distance function. 
Let us now consider the triangle formed between $c$, the vertex $v$ of the original  MinBall, $v_{0}$, and the rotated vertex $v_{\alpha}$ (shown in Figure~\ref{fig:scale}). Since our calculations focus towards the same vertex, we can work with Euclidean distances. The quantity $|v_0-c|$ defines the radius $r_1$ of the original polyhedron, and $r_2=|v_\alpha - c|$ the radius of the rotated one. With $\gamma=\pi-\theta-\alpha$ as the remaining angle in our triangle and using the sine rule, we find that
\begin{align*}
\frac{r_2}{r_1}=\frac{\sin \gamma}{\sin{\theta}}= \frac{\sin(\pi-\theta-\alpha)}{\sin{\theta}}.
\end{align*}
Observe that $\theta$ is the angle maximizing this scale difference. This applies to rotating by $\alpha$ in any direction about $\ora{v}$ (as shown in \Cref{fig:deflection}), and since this direction need not coincide with $\theta$, the scaled polyhedron might not touch the original.

For MaxBall$_{\alpha}(c)$ to be contained within MaxBall$(c)$, the same example holds after switching references to the scaled and original. In this case, $\theta$ minimizes $r_1/r_2$.
\begin{figure}[tb]
\begin{subfigure}{\linewidth}
\centering
\includegraphics[width=.75\linewidth]{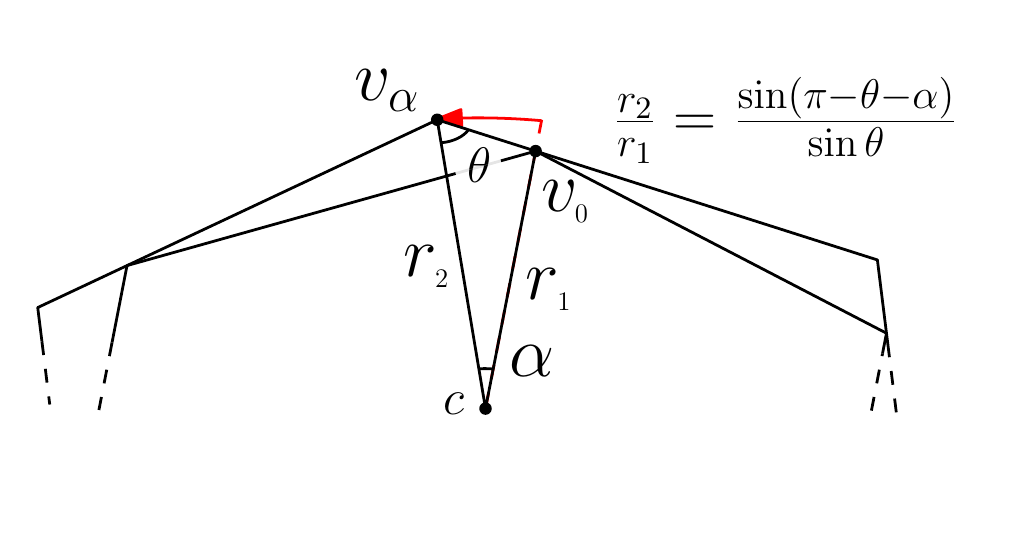}
\caption{A demonstration of the scale increase necessary for a polyhedron rotated by $\alpha$ to contain the original.}
\label{fig:scale}
\end{subfigure}
%\hfill
\begin{subfigure}{\linewidth}
\centering
\includegraphics[width=0.35\linewidth]{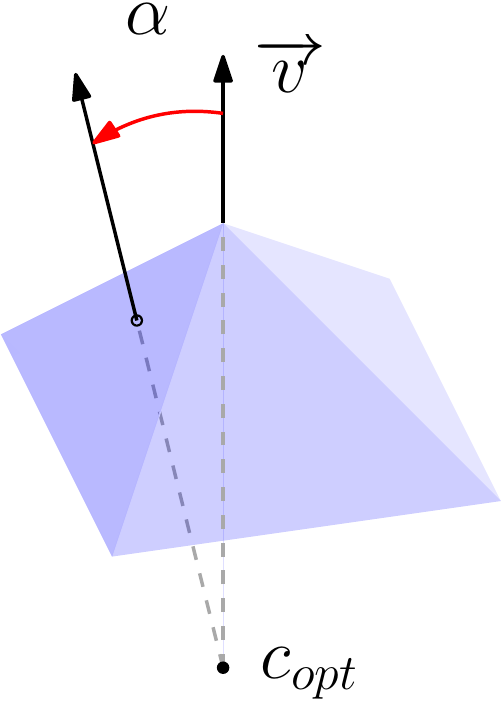}
\caption{A rotation by $\alpha$ in an arbitrary direction about $\ora{v}$.}
\label{fig:deflection}
\end{subfigure}
\caption{Visual representations for the effect of rotating by $\alpha$, demonstrating the scale increase and demonstrating how a rotation by $\alpha$ is defined for higher dimensions.}
\end{figure}
\end{proof}
Let us now determine the rotation from the optimal orientation that achieves a \epsapprox.
\pagebreak
\begin{lemma}
\label{lm:alpha}
Given a center $c$, we have that $\MWA_{\alpha}(c)$ is a \hbox{\epsapprox} when $\alpha$ is smaller than
\begin{align*}
  \arcsin\left( \frac{\sin\theta}{2f}\left(1{+}\eps \pm \sqrt{(1{+}\eps)^2+4f(f-1)}\right)\right)-\theta.
\end{align*}
\end{lemma}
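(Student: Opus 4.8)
The plan is to combine the rotation-induced scaling bound from Lemma~\ref{lm:scale} with the definition of slimness $f$ and set up the inequality that the width of the rotated annulus stays within a $(1+\eps)$ factor of the optimum. First I would fix the center $c$ to be $c_{opt}$, which is legitimate by the remark at the start of this section: the best annulus under rotation-and-translation is at least as good as the best under rotation alone about a fixed center. So it suffices to show $|\MWA_\alpha(c_{opt})| \le (1+\eps)\, w_{opt}$. Write $R_{opt}$, $r_{opt}$, $w_{opt}$ for the optimal outer radius, inner radius, and width at $c_{opt}$, and let $\sigma(\alpha) = \sin(\pi-\theta-\alpha)/\sin\theta$ be the scaling factor from Lemma~\ref{lm:scale}. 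After rotating by $\alpha$ the outer shell grows to at most $\sigma(\alpha)\,R_{opt}$ and the inner shell shrinks to at least $r_{opt}/\sigma(\alpha)$, so
\begin{align*}
  |\MWA_\alpha(c_{opt})| \;\le\; \sigma(\alpha)\,R_{opt} - \frac{r_{opt}}{\sigma(\alpha)}.
\end{align*}

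Next I would rewrite everything in terms of $f$. By definition $f = |\MinBall(c_{opt})|/|\MWA| = R_{opt}/w_{opt}$, hence $R_{opt} = f\,w_{opt}$ and $r_{opt} = R_{opt} - w_{opt} = (f-1)\,w_{opt}$. Substituting and dividing by $w_{opt}$, the $(1+\eps)$-approximation requirement becomes
\begin{align*}
  \sigma(\alpha)\,f - \frac{f-1}{\sigma(\alpha)} \;\le\; 1+\eps.
\end{align*}
Multiplying through by $\sigma(\alpha) > 0$ gives the quadratic inequality $f\,\sigma(\alpha)^2 - (1+\eps)\,\sigma(\alpha) - (f-1) \le 0$ in the variable $\sigma = \sigma(\alpha)$. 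Solving, the admissible range is $\sigma \le \bigl((1+\eps) + \sqrt{(1+\eps)^2 + 4f(f-1)}\bigr)/(2f)$ (the other root is negative since $f \ge 1$, but I would keep the $\pm$ to match the statement). Finally I would invert $\sigma(\alpha) = \sin(\pi-\theta-\alpha)/\sin\theta$: since $\sin(\pi-\theta-\alpha) = \sin(\theta+\alpha)$, the condition $\sigma(\alpha) \le s$ is equivalent to $\sin(\theta+\alpha) \le s\sin\theta$, i.e.\ $\theta + \alpha \le \arcsin(s\sin\theta)$, which rearranges to
\begin{align*}
  \alpha \;\le\; \arcsin\!\left(\frac{\sin\theta}{2f}\left(1+\eps \pm \sqrt{(1+\eps)^2+4f(f-1)}\right)\right) - \theta,
\end{align*}
exactly as claimed.

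The main obstacle I anticipate is the monotonicity/branch bookkeeping in the final inversion step: $\sigma(\alpha)$ is only monotone in $\alpha$ on the relevant range, and one has to argue $\theta+\alpha$ stays in $[0,\pi/2]$ (or at least in a range where $\arcsin\circ\sin$ behaves correctly) so that "$\sigma(\alpha)\le s$" really does translate into the stated arcsine bound rather than its supplement; I would also need to check that the expression inside the outer $\arcsin$ is at most $1$ so the bound is meaningful (this should follow from $\alpha$ being small, equivalently from the annulus being slim enough, i.e.\ $f$ not too large relative to $\theta$). A secondary point to be careful about is that Lemma~\ref{lm:scale} bounds the growth of $\MinBall_\alpha(c)$ and the shrinkage of $\MaxBall_\alpha(c)$ separately, and both bounds use the worst-case vertex direction; I should make sure applying both simultaneously to the same rotation is valid, which it is since each is an upper/lower bound holding for that specific $\alpha$ independently of direction, as Lemma~\ref{lm:scale} already establishes.
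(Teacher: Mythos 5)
Your proposal is correct and follows essentially the same route as the paper: apply Lemma~\ref{lm:scale} to bound the rotated outer and inner radii, rewrite $|\MinBall(c_{opt})|=fw_{opt}$ and $|\MaxBall(c_{opt})|=(f-1)w_{opt}$, impose the $(1+\eps)w_{opt}$ requirement, solve the resulting quadratic in the scale factor, and invert the sine. Your explicit use of $\sin(\pi-\theta-\alpha)=\sin(\theta+\alpha)$ is just a cleaner phrasing of the paper's remark that $\arcsin$ returns the supplement of $\gamma$, so the two arguments coincide.
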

\begin{proof}
Define $f$ as the ratio of the radius of MinBall($c_{opt}$) to $w_{opt}$
(\textit{i.e.}, $fw_{opt}=|\text{MinBall}(c_{opt})|$). Note that $f$ corresponds to the slimness of $S$ under $\dc$ over all rotations of $C$. Using Lemma~\ref{lm:scale}, we know that
\begin{gather}
|\text{MWA}_{\alpha}(c)|\leq \frac{\sin{\gamma}}{\sin{\theta}}|\MinBall(c_{opt})| - \frac{\sin{\theta}}{\sin{\gamma}}|\MaxBall(c_{opt})|\nonumber\\
\frac{\sin{\gamma}}{\sin{\theta}}f w_{opt} -\frac{\sin{\theta}}{\sin{\gamma}}(f{-}1)w_{opt}\leq (1{+}\eps) w_{opt}\label{eq:alpha1}\\
\frac{\sin{\gamma}}{\sin{\theta}}f-\frac{\sin{\theta}}{\sin{\gamma}}(f{-}1)\leq (1{+}\eps).\label{eq:alpha2}
\end{gather}
For a \epsapprox,  $|\MWA_{\alpha}(c)|\leq(1{+}\eps)w_{opt}$ imposing the right side of Relation~(\ref{eq:alpha1}), its left side follows by definition of $f$, and Relation~(\ref{eq:alpha2}) by cancellation of $w_{opt}$. Since $\theta$ is constant, we can rearrange the above into a quadratic equation and solve for $\sin \gamma$.
\begin{gather}
\sin{\gamma}=\frac{\sin\theta}{2f}\left(1{+}\eps \pm \sqrt{(1{+}\eps)^2+4f(f{-}1)}\right).\label{fm:in}
\end{gather}
However, $\arcsin$ will find $\gamma\leq \pi$, whereas we need the obtuse angle $\pi-\gamma$.

Thus, proving this lemma's titular bound, 

and achieving a \epsapprox.
\end{proof}

Let us now establish a more generous lower-bound that will prove helpful when developing algorithms.

\begin{lemma}
\label{lm:dir}
The angular deflection required for a \hbox{\epsapprox} is larger than 
${\theta\eps}/(2f)$.
\end{lemma}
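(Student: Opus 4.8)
The plan is to start from the exact threshold derived in Lemma~\ref{lm:alpha} and show that a cruder, cleaner quantity lower-bounds it. Recall Lemma~\ref{lm:alpha} says a \epsapprox\ is guaranteed whenever $\alpha < \arcsin\!\left(\tfrac{\sin\theta}{2f}\bigl(1{+}\eps + \sqrt{(1{+}\eps)^2+4f(f-1)}\bigr)\right) - \theta$ (taking the $+$ branch, since we want the largest permissible deflection). So it suffices to prove the inequality
\begin{align*}
\arcsin\!\left(\frac{\sin\theta}{2f}\Bigl(1{+}\eps + \sqrt{(1{+}\eps)^2+4f(f-1)}\Bigr)\right) - \theta \;\ge\; \frac{\theta\eps}{2f}.
\end{align*}
First I would simplify the argument of $\arcsin$: since $f \ge 1$ we have $4f(f-1) \ge 0$, so $\sqrt{(1{+}\eps)^2 + 4f(f-1)} \ge 1{+}\eps$, hence the bracketed term is at least $2(1{+}\eps)$ and the whole argument is at least $\tfrac{\sin\theta}{f}(1{+}\eps) = \sin\theta\cdot\tfrac{1{+}\eps}{f}$. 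Actually a sharper route: note $(1{+}\eps)^2 + 4f(f-1) \ge (2f-1)^2$ precisely when $(1{+}\eps)^2 \ge 1$, i.e. always; so the square root is at least $2f-1$, and the bracket is at least $1{+}\eps + 2f - 1 = 2f + \eps$, giving argument $\ge \tfrac{\sin\theta}{2f}(2f+\eps) = \sin\theta\,(1 + \tfrac{\eps}{2f})$.

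Next I would use the elementary bound $\arcsin(\sin\theta \cdot x) \ge \theta x$ for $x \ge 1$ and $\theta x \le \pi/2$ — or more robustly, the convexity/monotonicity fact that for $y$ in the relevant range $\arcsin(y) \ge \arcsin(\sin\theta) + (y - \sin\theta)\cdot(\arcsin)'(\sin\theta)^{+}$ fails to be clean because $\arcsin$ is convex on $[0,1]$, so the tangent lies below; instead I'd invoke convexity the other way, using the secant: for $y \in [\sin\theta, 1]$, $\arcsin$ lies above its chord is false, but we only need a one-sided estimate near $\sin\theta$, and since $(\arcsin)'(s) = 1/\sqrt{1-s^2} = 1/\cos\theta \ge 1$ at $s = \sin\theta$ and $\arcsin$ is convex (increasing derivative), we get $\arcsin(\sin\theta + h) \ge \sin\theta \cdot \text{something}$... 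The clean statement is: $\arcsin(\sin\theta + h) \ge \theta + h$ for $h \ge 0$ (as long as we stay in $[0,\pi/2]$), because the derivative of $\arcsin$ at $\sin\theta$ is $1/\cos\theta \ge 1$ and it only increases. Applying this with $h = \sin\theta \cdot \tfrac{\eps}{2f}$ gives $\arcsin(\text{argument}) \ge \theta + \sin\theta\cdot\tfrac{\eps}{2f} - \theta$... wait, I want to subtract $\theta$, so I get a lower bound of $\sin\theta\cdot\tfrac{\eps}{2f}$, and then I'd use $\sin\theta \ge \tfrac{2}{\pi}\theta$? No — that gives $\tfrac{\theta\eps}{\pi f}$, not $\tfrac{\theta\eps}{2f}$. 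So I should instead sharpen: rather than $h + \theta$, use that on $[0,\pi/2]$, $\arcsin(\sin\theta + h) - \theta \ge h/\cos\theta \ge h$ is too weak; I actually want $\arcsin(\sin\theta\,(1+t)) - \theta \ge \theta t$ for small $t \ge 0$, which holds because $\tfrac{d}{dt}\arcsin(\sin\theta\,(1+t))\big|_{t=0} = \sin\theta/\cos\theta = \tan\theta \ge \theta$, and convexity makes the derivative only grow. Taking $t = \eps/(2f)$ yields exactly $\arcsin(\text{argument}) - \theta \ge \theta\cdot\tfrac{\eps}{2f}$, completing the proof.

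The main obstacle I anticipate is pinning down the right elementary inequality for $\arcsin$ and verifying the range conditions (that $\theta$, and $\theta(1+\eps/(2f))$, and the arcsine argument all stay within $[0,\pi/2]$ so that monotonicity/convexity arguments are valid and the obtuse-angle subtlety from Lemma~\ref{lm:alpha} does not bite here). One has $\theta \le \pi/2$ by definition of the bottleneck angle, but the factor $1 + \eps/(2f)$ could in principle push $\theta(1+\eps/(2f))$ past $\pi/2$; handling that edge case — likely by noting that if the bound would exceed $\pi/2$ then essentially any rotation is tolerable and the claim is vacuous, or by an explicit small-$\eps$ assumption — is the only genuinely fiddly part. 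Everything else is a short chain of $\sqrt{(1{+}\eps)^2 + 4f(f-1)} \ge 2f-1$ followed by the tangent-line-from-convexity estimate $\tan\theta \ge \theta$.
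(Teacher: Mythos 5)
Your proposal follows essentially the same route as the paper: reduce the bracketed term to show the arcsine argument is at least $\sin\theta\,(1+\eps/(2f))$ (your $\sqrt{(1{+}\eps)^2+4f(f-1)}\ge 2f-1$ is the same inequality the paper gets via $\sqrt{1/(4f^2)-1/f+1}=|1-1/(2f)|$), then apply the tangent-line/Taylor bound $\arcsin((1{+}t)\sin\theta)\ge(1{+}t)\theta$. The final chain you settle on is correct, and your attention to the range conditions on $\theta$ and the arcsine argument is a point the paper itself glosses over.
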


\begin{proof}
Observe that $\gamma$ is of the form $\arcsin (k \sin \theta)$ and thus, in order for $\alpha=\gamma{-}\theta$ to be positive, we must have $\theta<\pi/2$ and $k>1$. We will prove this is the case.

\begin{gather}
k=\frac{1{+}\eps}{2f} + \sqrt{\left(\frac{1{+}\eps}{2f}\right)^2-\frac{1}{f}+1} \label{eq:dir1}\\
\sqrt{\frac{1}{4f^2}-\frac{1}{f}+1}= \left|1-\frac{1}{2f}\right|\label{eq:dir2}\\
k>\frac{1{+}\eps}{2f} + \left|1-\frac{1}{2f}\right|=1+\frac{\eps}{2f}.\hspace{.4cm}\label{eq:dir3}
\end{gather}
Equation~(\ref{eq:dir1}) follows from Equation~(\ref{fm:in}) after expanding. Relation~(\ref{eq:dir3}) follows after using Equation~(\ref{eq:dir2}) as a lower bound for the square root term in Equation~(\ref{eq:dir1}) since $\eps>0$ and $f>1$.
This allows us to bound $\arcsin\left(\left(1+\dfrac{\eps}{2f}\right)\sin\theta\right)$ by using a Taylor's series expansion to find $(1+k)\cdot\theta\leq\arcsin((1+k)\sin\theta)$, 

thus proving that the bound from Lemma~\ref{lm:alpha} is greater than $\frac{\theta\eps}{2f}$.
\end{proof}

\begin{lemma}
\label{lm:rotapprox}
For fixed rotation of $C$, assume we have an $O(g(n))$-time algorithm for the optimal minimum-width annulus under translation. 

We can find a\linebreak \epsapprox\ of the $\MWA$ under rotations and translations in $O(f^{d-1}\eps^{1-d}g(n))$ time. 
\end{lemma}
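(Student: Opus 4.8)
The plan is to discretize the space of rotations into a finite net of orientations, run the given translation-only algorithm for each one, and argue that some orientation in the net is close enough to the optimal rotation that the resulting annulus width is within a $(1+\eps)$ factor. Concretely, first I would recall from Lemma~\ref{lm:dir} that any rotation $\alpha$ with $|\alpha| \le \theta\eps/(2f)$ about a fixed axis already suffices for a \epsapprox\ (when the center is also optimized, which by the opening remark of \Cref{sec:rotations} only helps). So it is enough to build a set of rotations such that \emph{every} rotation of $C$ lies within angular distance $\theta\eps/(2f)$ of some rotation in the set, and then invoke the translation algorithm $O(g(n))$ times, once per net rotation, keeping the best annulus found.

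Next I would count the size of this rotational net. The group of rotations $SO(d)$ has dimension $\binom{d}{2}$, but the relevant fact here is the one the excerpt foreshadows through its definition of $\theta$ via the bottleneck angle of the polyhedral cones at the vertices of $C$: a rotation of $C$ is determined, up to the tolerated error, by where it sends the $O(1)$ vertex directions $\ora{v}$, and each such direction ranges over the unit sphere $S^{d-1}$. A standard $\beta$-net of $S^{d-1}$ (in angular distance) has size $O(\beta^{-(d-1)})$. With $\beta = \Theta(\theta\eps/f)$ and $\theta$ a constant depending only on the fixed polytope $C$, this gives $O((f/\eps)^{d-1}) = O(f^{d-1}\eps^{-(d-1)}) = O(f^{d-1}\eps^{1-d})$ candidate orientations. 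Multiplying by the $O(g(n))$ cost of the translation subroutine per orientation yields the claimed $O(f^{d-1}\eps^{1-d}g(n))$ running time. (Care is needed that $f$ here is the slimness taken over \emph{all} rotations, as defined in the proof of Lemma~\ref{lm:alpha}, so the tolerance $\theta\eps/(2f)$ is simultaneously valid at whatever the optimal orientation turns out to be.)

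The main obstacle I expect is the geometric reduction in the previous paragraph: making precise the claim that controlling the images of the $O(1)$ vertex directions to within angular error $\beta$ controls the full rotation of $C$ to within the error tolerated by Lemma~\ref{lm:dir}, and that this can be realized by a product of $S^{d-1}$-nets of total size $O(\beta^{-(d-1)})$ rather than a net of the full $\binom{d}{2}$-dimensional group (which would give the wrong exponent). The honest way to do this is to fix one vertex direction to pin down a point on $S^{d-1}$, which leaves a residual rotation in the $(d-1)$-dimensional stabilizer $SO(d-1)$; iterating, the residual rotations compose to something of dimension $\binom{d}{2}$, so one must instead argue directly that the bottleneck-angle quantity $\theta$ converts a per-vertex angular perturbation of size $\beta$ into a uniform bound of the form handled by Lemma~\ref{lm:scale}–\ref{lm:dir}, and that a single $S^{d-1}$-net (applied to a generating direction together with a bounded-size set of auxiliary directions whose net contributes only $O(\beta^{-(d-1)})$ more points) is enough. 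Once that reduction is in hand, the rest is just "net size times subroutine cost" and the correctness follows immediately from Lemma~\ref{lm:dir}.
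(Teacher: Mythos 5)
Your proposal takes essentially the same route as the paper: discretize the orientation space into a $(d-1)$-dimensional grid with angular resolution $\Theta(\theta\eps/f)$, run the translation-only algorithm at each of the $O\bigl((f/(\theta\eps))^{d-1}\bigr)$ grid orientations, and invoke Lemma~\ref{lm:dir} for correctness. The dimensional subtlety you flag (a $(d-1)$-dimensional net over vertex directions versus the $\binom{d}{2}$-dimensional rotation group) is real, but the paper's own proof makes the same simplification without further justification, so your treatment matches it.
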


\begin{proof}
A $d$-dimensional shape has a $(d{-}1)$-dimensional axis of rotation. Let us evenly divide the unit circle into $k$ directions. Let us also define a collection of all possible direction combinations as a grid of directions. For each grid direction, rotate $C$ by the defined direction and calculate the MWA in $O(g(n))$ time. The optimal orientation must lie between the $(d{-}1)$-dimensional cube formed by $2^{d-1}$ grid directions. 

Therefore, as long as the diagonal is smaller than $\frac{\theta\eps}{f}$, there exists a grid direction within $\frac{\theta\eps}{2f}$ of the optimal orientation, which implies a \epsapprox\ by Lemma~\ref{lm:dir}. 
Thus, we can achieve a \epsapprox\ in time $O\left(g(n)\cdot \left(\frac{2\pi f\sqrt{d-1}}{\theta\eps}\right)^{d-1}\right)$, where $d$ and $\theta$ are constant under a fixed distance function $\dc$.

\end{proof}

With a fixed center, Lemma~\ref{lm:rotapprox} can be used to approximate $\MWA$ under rotations in $O(n f^{d-1}\eps^{1-d})$ time.

\begin{theorem}
One can find a \epsapprox\ of $\MWA$ under rotations and translations in $O(f^{d-1}\eps^{1-2d}n)$ time for $d{\geq}3$, and $O(f n \eps^{-1}\log \eps^{-1}+ f\eps^{-3})$ time for $d{=}2$.
\end{theorem}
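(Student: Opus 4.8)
The plan is to derive both bounds as a black-box composition of the rotation reduction in \Cref{lm:rotapprox} with the translation-only schemes already in hand. \Cref{lm:rotapprox} turns any $O(g(n))$-time algorithm for the best-center problem at a \emph{fixed} orientation into an $O(f^{d-1}\eps^{1-d}g(n))$-time algorithm that also optimizes over rotations, by laying a grid of $O\big((f\sqrt{d-1}/(\theta\eps))^{d-1}\big)$ orientations over the $(d-1)$-dimensional rotation space, running the translation subroutine on the rotated copy of $C$ at each, and relying on \Cref{lm:dir} to ensure some grid orientation lies within $\theta\eps/(2f)$ of the optimal one. So the whole proof is: instantiate $g$ with \Cref{thm:epsapprox} for $d\ge3$ and with \Cref{thm:2dapprox} for $d=2$, and simplify.

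Before the arithmetic I would dispatch two routine points. First, the error budget: since the translation subroutines are themselves only $(1+\eps)$-approximate, I replace $\eps$ by $\eps/3$ throughout, so that the chosen grid orientation yields a width within $(1+\eps/3)$ of the best width achievable at that orientation (by \Cref{lm:dir}, applied to the center $c_{opt}$), which is in turn within $(1+\eps/3)$ of $w_{opt}$; for $\eps\le1$ the product is at most $1+\eps$, and the constant factor absorbed into $\eps$ vanishes in the $O(\cdot)$. Second, uniformity of $g$: each grid orientation calls the translation algorithm on a rotated copy of $C$, but the only shape-dependent constants hidden in $g$ — the facet count $m$ and the relevant fatness ratio of $C$ under the enclosing-cube metric from \Cref{thm:epsapprox} — depend continuously on the rotation over the compact rotation space and are therefore uniformly bounded; likewise $f$ and $\theta$ are defined over all rotations of the fixed polytope and are treated as constants. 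Hence $g(n)$ retains its stated asymptotic form at every orientation.

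Now I substitute. For $d\ge3$, \Cref{thm:epsapprox} gives $g(n)=O(\eps^{-d}n)$, so \Cref{lm:rotapprox} yields $O(f^{d-1}\eps^{1-d}\cdot\eps^{-d}n)=O(f^{d-1}\eps^{1-2d}n)$. For $d=2$, \Cref{thm:2dapprox} gives $g(n)=O(n\log\eps^{-1}+\eps^{-2})$, and since $f^{d-1}\eps^{1-d}=f\eps^{-1}$ in this case, the bound is $O\big(f\eps^{-1}(n\log\eps^{-1}+\eps^{-2})\big)=O(fn\eps^{-1}\log\eps^{-1}+f\eps^{-3})$, as claimed.

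Because this is a pure composition, nothing here is genuinely hard; the one place that rewards care is the charging step — the translation call at a grid orientation must be measured against the optimal width \emph{restricted to that orientation}, not against $w_{opt}$ directly, and one must confirm via \Cref{lm:dir} that this restricted optimum is within a $(1+O(\eps))$ factor of $w_{opt}$ — together with checking that re-invoking the translation algorithm at each of the many grid orientations does not inflate its per-call cost, which is precisely the uniformity observation above. With those in place the two displayed time bounds follow at once.
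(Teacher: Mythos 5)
Your proposal is correct and follows essentially the same route as the paper: compose \Cref{lm:rotapprox} with \Cref{thm:epsapprox} (for $d\ge3$) or \Cref{thm:2dapprox} (for $d=2$), split the error budget between the rotation grid and the translation subroutine (the paper uses $(1+\xi)^2=1+\eps$ where you use $\eps/3$ per stage, which is an equivalent device), and simplify the resulting products to the claimed bounds.
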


\begin{proof}
Consider using an approximation algorithm (from Theorems~\ref{thm:epsapprox} or~\ref{thm:2dapprox}) instead of an exact algorithm as in Lemma~\ref{lm:rotapprox}. Let us define $(1+\xi)$ as the approximation ratio necessary from the subroutines in order to achieve an overall approximation ratio of $(1+\eps)$, such that $(1+\xi)^2= 1+\eps$. Since $\xi= \sqrt{1+\eps}-1$ and $0<\eps<1$, $\xi$ must be larger than $(\sqrt{2}-1)\cdot \eps$, and thus, we can always pick a value for $\xi$ which is $O(\eps)$ and achieves the desired approximation. Thus, by following Lemma~\ref{lm:rotapprox}, we can find a $(1+(\sqrt{2}-1)\cdot \eps)$-approximation using the $(1+(\sqrt{2}-1)\cdot \eps)$-approximation algorithm from Theorem~\ref{thm:epsapprox} to find a \epsapprox\ in $O(f^{d-1}\eps^{1-d}\cdot \eps^{-d}n)$ time. Alternatively, for two dimensions, we can instead use the algorithm from Theorem~\ref{thm:2dapprox} to find a \epsapprox\ in $O(fn \eps^{-1}\log \eps^{-1}+ f\eps^{-3})$ time.
\end{proof}

%---------------------------- Bibliography -------------------------------

\small
\bibliographystyle{abbrv}

\bibliography{refs}

\begin{thebibliography}{10}

\bibitem{a-core-04}
P.~K. Agarwal, S.~Har-Peled, and K.~R. Varadarajan.
\newblock Approximating extent measures of points.
\newblock {\em J. ACM}, 51(4):606–635, 2004.

\bibitem{agarwal2005geometric}
P.~K. Agarwal, S.~Har-Peled, K.~R. Varadarajan, et~al.
\newblock Geometric approximation via coresets.
\newblock {\em Combinatorial and Computational Geometry}, 52(1), 2005.

\bibitem{agarwal2008robust}
P.~K. Agarwal, S.~Har-Peled, and H.~Yu.
\newblock Robust shape fitting via peeling and grating coresets.
\newblock {\em Discrete \& Computational Geometry}, 39(1):38--58, 2008.

\bibitem{amenta2007size}
N.~Amenta, D.~Attali, and O.~Devillers.
\newblock Size of {Delaunay} triangulation for points distributed over
  lower-dimensional polyhedra: a tight bound.
\newblock {\em Neural Information Processing Systems (NeurIPS): Topological
  Learning}, 2007.

\bibitem{amenta1999surface}
N.~Amenta and M.~Bern.
\newblock Surface reconstruction by {Voronoi} filtering.
\newblock {\em Discrete \& Computational Geometry}, 22(4):481--504, 1999.

\bibitem{arya2018approximate}
S.~Arya, G.~D. da~Fonseca, and D.~M. Mount.
\newblock Approximate convex intersection detection with applications to width
  and {Minkowski} sums.
\newblock In {\em 26th European Symposium on Algorithms (ESA)}. Schloss
  Dagstuhl-Leibniz-Zentrum fuer Informatik, 2018.

\bibitem{attali2003complexity}
D.~Attali and J.-D. Boissonnat.
\newblock Complexity of the delaunay triangulation of points on polyhedral
  surfaces.
\newblock {\em Discrete \& Computational Geometry}, 30(3):437--452, 2003.

\bibitem{attali2004linear}
D.~Attali and J.-D. Boissonnat.
\newblock A linear bound on the complexity of the {Delaunay} triangulation of
  points on polyhedral surfaces.
\newblock {\em Discrete \& Computational Geometry}, 31(3):369--384, 2004.

\bibitem{BAE20182}
S.~W. Bae.
\newblock Computing a minimum-width square annulus in arbitrary orientation.
\newblock {\em Theoretical Computer Science}, 718:2--13, 2018.

\bibitem{BAE2019398}
S.~W. Bae.
\newblock Computing a minimum-width cubic and hypercubic shell.
\newblock {\em Operations Research Letters}, 47(5):398--405, 2019.

\bibitem{BAE2021101697}
S.~W. Bae.
\newblock On the minimum-area rectangular and square annulus problem.
\newblock {\em Computational Geometry}, 92:101697, 2021.

\bibitem{DBLP:journals/jda/BarequetBDG05}
G.~Barequet, P.~Bose, M.~T. Dickerson, and M.~T. Goodrich.
\newblock Optimizing a constrained convex polygonal annulus.
\newblock {\em J. Discrete Algorithms}, 3(1):1--26, 2005.

\bibitem{barequet2008covering}
G.~Barequet, M.~T. Dickerson, and Y.~Scharf.
\newblock Covering points with a polygon.
\newblock {\em Computational Geometry}, 39(3):143--162, 2008.

\bibitem{bose1998}
P.~Bose and P.~Morin.
\newblock Testing the {{Quality}} of {{Manufactured Disks}} and {{Cylinders}}.
\newblock In G.~Goos, J.~Hartmanis, J.~{van Leeuwen}, K.-Y. Chwa, and O.~H.
  Ibarra, editors, {\em Algorithms and {{Computation}}}, volume 1533, pages
  130--138. {Springer Berlin Heidelberg}, {Berlin, Heidelberg}, 1998.

\bibitem{chan2000approximating}
T.~M. Chan.
\newblock Approximating the diameter, width, smallest enclosing cylinder, and
  minimum-width annulus.
\newblock In {\em 16th Symposium on Computational Geometry (SoCG)}, pages
  300--309, 2000.

\bibitem{das2016}
S.~Das, A.~Nandy, and S.~Sarvottamananda.
\newblock Linear time algorithm for 1-{{Center}} in {$\mathfrak{{R}}^d$} under
  convex polyhedral distance function.
\newblock In D.~Zhu and S.~Bereg, editors, {\em Frontiers in {{Algorithmics}}},
  volume 9711, pages 41--52. Springer, 2016.

\bibitem{devillers2002round}
O.~Devillers and P.~A. Ramos.
\newblock Computing roundness is easy if the set is almost round.
\newblock {\em International Journal of Computational Geometry \&
  Applications}, 12(3):229--248, 2002.

\bibitem{erickson2001nice}
J.~Erickson.
\newblock Nice point sets can have nasty {Delaunay} triangulations.
\newblock In {\em 17th Symposium on Computational Geometry (SoCG)}, pages
  96--105, 2001.

\bibitem{GLUCHSHENKO2009168}
O.~N. Gluchshenko, H.~W. Hamacher, and A.~Tamir.
\newblock An optimal {$O(n\log n)$} algorithm for finding an enclosing planar
  rectilinear annulus of minimum width.
\newblock {\em Operations Research Letters}, 37(3):168--170, 2009.

\bibitem{mehlhorn1997}
K.~Mehlhorn, T.~C. Shermer, and C.~K. Yap.
\newblock A complete roundness classification procedure.
\newblock In {\em Proceedings of the Thirteenth Annual Symposium on
  {{Computational}} Geometry - {{SCG}} '97}, pages 129--138, {Nice, France},
  1997. {ACM Press}.

\bibitem{MUKHERJEE201374}
J.~Mukherjee, P.~R. {Sinha Mahapatra}, A.~Karmakar, and S.~Das.
\newblock Minimum-width rectangular annulus.
\newblock {\em Theoretical Computer Science}, 508:74--80, 2013.

\bibitem{phillips2017coresets}
J.~M. Phillips.
\newblock Coresets and sketches.
\newblock In {\em Handbook of Discrete and Computational Geometry}, pages
  1269--1288. Chapman and Hall/CRC, 2017.

\bibitem{tracr}
U.~Thakker, R.~Patel, S.~Tanwar, N.~Kumar, and H.~Song.
\newblock Blockchain for diamond industry: Opportunities and challenges.
\newblock {\em IEEE Internet of Things Journal}, pages 1--1, 2020.

\bibitem{yu2008practical}
H.~Yu, P.~K. Agarwal, R.~Poreddy, and K.~R. Varadarajan.
\newblock Practical methods for shape fitting and kinetic data structures using
  coresets.
\newblock {\em Algorithmica}, 52(3):378--402, 2008.

\end{thebibliography}

\end{document}